\documentclass[runningheads,preprint]{elsarticle}




\usepackage{amssymb}
\usepackage{amsmath}


\usepackage[T1]{fontenc}

\usepackage{graphicx}
\usepackage{tikz}
\usetikzlibrary{shapes.geometric}
\usepackage{todonotes}
\usepackage{soul}
\usepackage{hyperref}
\usepackage{algorithmicx}
\usepackage{algorithm, algpseudocode}
\usepackage{xspace}
\usepackage{enumitem}

\newcommand{\pp}{\textsc{Path Partition}\xspace}
\newcommand{\pc}{\textsc{Path Cover}\xspace}

\usepackage[most]{tcolorbox}
\newcommand{\problemdef}[4]{
	\begin{tcolorbox}[width = \textwidth,colback=white,arc=0pt,outer arc=0pt,boxrule=0.7pt,left =0.5em,right=0em]#1 #2		\\[2pt]
		\begin{tabular}{ @{}l p{0.84\textwidth} c }
			\textbf{Input:} & #3 \\[.5pt]
			\textbf{Task:} & #4
		\end{tabular}
	\vspace{-0.25em}
	\end{tcolorbox}
 }

\usepackage{amsthm}

\newtheorem{theorem}{Theorem}
\newtheorem{lemma}[theorem]{Lemma}
\newtheorem{corollary}[theorem]{Corollary}
\newtheorem{proposition}[theorem]{Proposition}
\newdefinition{definition}[theorem]{Definition}
\begin{document}

\begin{frontmatter}



\title{Polynomial-time algorithms for \pc and \pp on trees and graphs of bounded treewidth\tnoteref{conf}} 
\tnotetext[conf]{A preliminary version of the paper appeared in the proceedings of the CALDAM~2025 conference \cite{DBLP:conf/caldam/FoucaudMMR25}.}

\author[1]{Florent Foucaud,\fnref{fn1}}
\fntext[fn1]{Funded by French government IDEX-ISITE initiative 16-IDEX-0001 (CAP 20-25), International Research Center "Innovation Transportation and Production Systems" of the I-SITE CAP 20-25, and ANR project GRALMECO (ANR-21-CE48-0004).}

\author[2]{Atrayee Majumder,\fnref{fn2}}
\fntext[fn2]{Supported by Academic Career in Pilsen Program, 2025 cycle.}

\author[3]{Tobias M{\"{o}}mke,\fnref{fn3}}
\fntext[fn3]{Partially supported by DFG Grant 439522729 (Heisenberg-Grant)}

\author[3]{Aida Roshany-Tabrizi}

\affiliation[1]{organization={CNRS, Clermont Auvergne INP, Mines Saint-{\'{E}}tienne, LIMOS}, postcode={63000}, city={Clermont-Ferrand}, country={France}.}

\affiliation[2]{organization={Dept. of Mathematics, University of West Bohemia in Pilsen}, country={Czech Republic}.}

\affiliation[3]{organization={University of Augsburg}, country={Germany}.}
\begin{abstract}
    In the \pc problem, one asks to cover the vertices of a graph using the smallest possible number of (not necessarily disjoint) paths. While the variant where the paths need to be pairwise vertex-disjoint, which we call \pp, is extensively studied, surprisingly little is known about \pc. We start filling this gap by designing a linear-time algorithm for \pc on trees. 
    We show that \pc can be solved in polynomial time on graphs of bounded treewidth using a dynamic programming scheme. It runs in XP time $n^{t^{O(t)}}$ (where $n$ is the number of vertices and $t$ the treewidth of the input graph) or $\kappa^{t^{O(t)}}n$ if there is an upper-bound $\kappa$ on the solution size. A similar algorithm gives an FPT $2^{O(t\log t)}n$ algorithm for \pp, which can be improved to (randomized) $2^{O(t)}n$ using the Cut\&Count technique. These results also apply to the variants where the paths are required to be induced (i.e. chordless) and/or edge-disjoint.
\end{abstract}
\begin{keyword}
Path Cover \sep Path Partition \sep Trees \sep Treewidth


\end{keyword}



\end{frontmatter}

\section{Introduction}

Path problems in graphs are fundamental problems in algorithmic graph theory, consider for example the problem of computing shortest paths in a graph, which has been one of the first studied graph problems for which efficient algorithms were obtained~\cite[Chapter 24]{CLRS3}. Finding \emph{disjoint paths} is also a problem of utmost importance, both in algorithmic and structural graph theory~\cite{DBLP:journals/jct/RobertsonS95b}. When it comes to \emph{covering} the graph, the \textsc{Hamiltonian Path} problem is another classic path-type problem studied both in combinatorics and computer science. We study its generalizations, \pc and \pp, which are about covering the vertices of a path using a minimum number of paths (unrestricted and pairwise vertex-disjoint, respectively). They are formally defined as follows.

\problemdef{\pc}{}{A graph $G$.}{Compute a minimum-size \emph{path cover}, that is, a set of paths of $G$ such that every vertex of $G$ belongs to at least one of the paths.}

Its variant that requires the solution paths to be pairwise vertex-disjoint is very well-studied and defined as follows.

\problemdef{\pp}{}{A graph $G$.}{Compute a minimum-size \emph{path partition}, that is, a set of pairwise vertex-disjoint paths of $G$ such that every vertex of $G$ belongs to exactly one of the paths.}

Our goal is to study the algorithmic complexity of \pc and \pp on trees and graphs of bounded treewidth. Treewidth is an important graph parameter and the associated tree decompositions enable to solve various problems efficiently when this parameter is bounded. 
We refer to the book~\cite{Paralg} for more on the topic of algorithms for graphs of bounded treewidth.

\paragraph{Related work} As both problems generalize \textsc{Hamiltonian Path} (which amounts to decide whether a graph can be covered by a single path), they are both NP-hard, and this holds even, for example, for 2-connected cubic bipartite planar graphs~\cite{MR596313}.

Most work in the literature refers to \pp as ``\pc'', which is unfortunately misleading. \pp is much more studied than \pc, see~\cite{PC-cocomp,DBLP:journals/dam/HungC07} for examples of such works on \pp. In some cases, \pp was also called \textsc{Hamiltonian Completion}~\cite{Franzblau_Raychaudhuri_2002,GH74,KUNDU197655}. To avoid any confusion between the two problems, we use the terminology of \pc and \pp as defined above, a choice taken from the survey~\cite{manuel2018revisiting} on path-type problems. 
Also refer to the PhD thesis~\cite{Baker13}.

Several papers from the 1970s studied \pp on trees~\cite{boesch1974covering,GH74,KUNDU197655}. 
However, they did not explicitly analyze the running times of their algorithms. 
A properly analyzed linear-time algorithm was given in 2002~\cite{Franzblau_Raychaudhuri_2002}. \pp was also shown to be solvable in polynomial time on many other graph classes such as cographs~\cite{LIN199575}, distance-hereditary graphs~\cite{DBLP:journals/dam/HungC07}, co-comparability graphs~\cite{PC-cocomp} (which contain all interval graphs), 
or block graphs/cactii~\cite{DBLP:journals/dam/PanC05}. We do not know of any explicit algorithm for \pp on graphs of bounded treewidth, but algorithms exist for the closely related \textsc{Cycle Partition} problem~\cite{DBLP:journals/talg/CyganNPPRW22}.


Both \pc and \pp have numerous applications, in particular in program testing~\cite{NH79}, circuit testing~\cite{PC95}, or machine translation~\cite{LIN200673}, to name a few. Although \pp is more widely studied, \pc is also a natural problem, with specific applications in bio-informatics when restricted to directed acyclic graphs~\cite{dagPC,DBLP:journals/bmcbi/RizziTM14}.

We refer to~\cite{dagPC,9628018,NH79,DBLP:journals/bmcbi/RizziTM14} for the few references about \pc that we are aware of. 

\paragraph{Our results} Although \pp is well-studied on trees and other graph classes, surprisingly, this is not the case of \pc. Note that despite the two problems having similar statements, they typically have very different optimal solutions. For example, on a star with $k$ leaves, an optimal path partition has size $k-1$, but an optimal path cover has size $\lceil k/2\rceil$.

We first focus (in Section~\ref{sec:trees}) on \pc on trees, for which no linear-time algorithm has been given in the literature. For trees, the size of an optimal solution is given by the ceiling of half of the number of leaves. The proof of this fact was given by Harary and Schwenk in 1972~\cite{article1} for the problem of covering the \emph{edges} of the tree. An analysis of their proof yields a quadratic-time algorithm. We show how \pc can in fact be solved in linear time on trees, by giving an improved algorithm based on depth-first-search (DFS).

We then study graphs of bounded treewidth in Section~\ref{sec:tw}. We design an explicit dynamic programming algorithm that runs in time $n^{t^{O(t)}}$ for graphs of treewidth at most $t$ and order $n$. In fact, this algorithm runs in time $\kappa^{t^{O(t)}}n$, where $\kappa$ is the maximum number of allowed paths in a solution (the "solution size" if we reformulate \pc as a decision problem). With a slight simplification, the same algorithm also solves \pp and runs in improved FPT running time $2^{O(t\log t)}n$. However, we show in Section~\ref{sec:PP} how to obtain a randomized algorithm for \pp running in time $2^{O(t)}n$ using the Cut\&Count technique.

It is not clear whether \pc can be solved in FPT time as well or not; however, we give some indications of why that might not be the case.

Moreover, we argue that our algorithms also apply to the versions of both \pc and \pp where the paths in the solution are required to be \emph{induced} (i.e. chordless) or pairwise edge-disjoint. These variants have been studied in the literature (see~\cite{fernau23,manuel2018revisiting} and references therein).

We finally conclude in Section~\ref{sec:conclu}.
\newpage
\section{A linear-time algorithm for \pc on trees}\label{sec:trees}

We first study \pc on trees. In~\cite[Theorem 7]{article1}, it is proved that the minimum number of paths needed to cover the \emph{edges} of a tree is equal to $\lceil \ell/2\rceil$, where $\ell$ is the number of leaves. This is an obvious lower bound, since for any leaf of a tree, there must be a solution path starting at that leaf. This also holds for covering the \emph{vertices}. The argument of~\cite{article1}, based on pairing the leaves arbitrarily and switching the pairing to increase the number of covered vertices at each step, leads to a quadratic-time algorithm for \pc on trees. We next present an algorithm for solving \pc of a tree with a runtime that is linear in the number of vertices.

\begin{theorem}
\label{thm:PCTL}
\pc can be solved in linear time on trees, and the optimal size of a solution for a tree with $\ell$ leaves is $\lceil\ell/2\rceil$.    
\end{theorem}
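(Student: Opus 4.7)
The plan is to prove that $\lceil \ell/2 \rceil$ is both a lower bound and the output of an explicit linear-time DFS-based construction. The lower bound is immediate: every leaf has degree $1$ and must therefore appear as an endpoint of any path that covers it, so a cover by $p$ paths provides at most $2p$ endpoint slots for the $\ell$ leaves, forcing $p \ge \lceil \ell/2 \rceil$.

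For the construction (the trivial cases $|V(T)|\le 2$ being handled separately), I would root $T$ at a non-leaf vertex $r$, run a single DFS from $r$, and list the leaves in the order they are first visited as $l_1, \ldots, l_\ell$. The construction outputs the ``half-shift'' pairs $\{l_i, l_{i + \lceil \ell/2 \rceil}\}$ for $1 \le i \le \lfloor \ell/2 \rfloor$, each pair specifying the unique $l_i$-to-$l_{i+\lceil \ell/2 \rceil}$ path in $T$. When $\ell$ is odd, the leaf $l_{\lceil \ell/2 \rceil}$ is left unpaired and is instead assigned the path that starts at this leaf and climbs up the chain of degree-$2$ ancestors until it reaches the first branching ancestor. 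This produces exactly $\lceil \ell/2 \rceil$ paths; the DFS, the pairing, and the stem extension each take $O(n)$ time.

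The main obstacle is the coverage argument. Leaves are covered as path endpoints. For an internal vertex $v\neq r$, the leaves of the subtree $T_v$ form a contiguous block of the DFS leaf ordering of some size $k$ with $1\le k\le \ell-1$, and $v$ lies on the $l_i$-$l_j$ path iff exactly one of $l_i,l_j$ lies in that block. A short case analysis on $k$ versus $\lceil \ell/2 \rceil$ shows that the half-shift pairing always contains a ``straddling'' pair, with exactly one exception: in the odd-$\ell$ case, the singleton block consisting only of the unpaired leaf admits no straddling pair. This exception is precisely what the stem extension is designed for, since the internal vertices whose subtree contains just the unpaired leaf are exactly the stem ancestors of that leaf, and all of them are covered by the extended singleton path. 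The root $r$ is handled by applying the same straddling argument to any of its child subtrees (and the extension reaches $r$ itself if the exceptional stem is $r$'s own child subtree). Making this case analysis fully rigorous, and checking that the stem extension reaches just far enough to fill every gap without creating spurious paths, is the crux of the proof.
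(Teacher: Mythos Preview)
Your proposal is correct and takes a genuinely different route from the paper's own argument.

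The paper does not use a global leaf pairing at all. Instead, after rooting at an internal vertex, it runs a bottom-up DFS that maintains, at each vertex $v$, a set of \emph{open} paths (paths with one endpoint at a leaf and the other at $v$). At $v$ it greedily combines pairs of open paths arriving from distinct children into closed leaf-to-leaf paths, keeps at most two open paths to hand up to the parent, and closes whatever remains at the root. The count $\lceil \ell/2\rceil$ falls out because every closed path consumes two leaves and at most one extra path is produced at the root when $\ell$ is odd. The linear-time analysis is essentially that of DFS with $O(1)$ work per edge.

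Your approach replaces this incremental combination by a single combinatorial observation: the DFS leaf order makes every proper subtree correspond to a contiguous index interval, and the half-shift pairing $i\leftrightarrow i+\lceil \ell/2\rceil$ straddles every nonempty proper interval except, in the odd case, the singleton $\{\lceil \ell/2\rceil\}$, which the stem path handles. This gives a shorter and more structural correctness proof, and it outputs the optimal cover as an explicit list of endpoint pairs with no bookkeeping. The paper's algorithm, on the other hand, is closer in spirit to an implementation one would actually code (it never needs to reason about intervals of leaves and naturally adapts if one later wants additional local constraints on how paths are merged). Both achieve the same $O(n)$ bound; the trade-off is elegance of the proof versus directness of the algorithm.
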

Consider the input tree $T$ to be rooted at an arbitrary internal vertex $r$ of $T$.
The intuition here is to cover the vertices of the tree by simulating the Depth-First-Search (DFS) algorithm with some modifications in the steps, thus the running time would become the same as the running time of DFS.\\
First, we recall the recursive DFS algorithm, see e.g.~\cite[Chapter 22.3]{CLRS3}. 
In this algorithm, there are two timestamps assigned to each vertex: the first timestamp is given when we discover the vertex for the first time while traversing the tree, and the second timestamp is given when we finish traversing all the vertices of the sub-tree rooted at that particular vertex. We use these timestamps in our algorithm.
We use the first timestamp to mark the vertex $v$ as visited and the second timestamp to consider a valid solution of \pc to cover the vertices rooted at $v$, including $v$.
The algorithm starts at the root and marks the vertices as visited in DFS order until it reaches a leaf. 
At the leaf on the way back, it starts a path with endpoints at the leaf and continues toward the parent, and the paths are recorded.
Now we need some definitions to specify the steps of our algorithm.

We use $p=(x_1,x_2,\dots,x_k)$ as a notation for paths where $x_i$ are all distinct vertices of $T$ connected by the path $p$. 
The vertices $x_2,\dots,x_{k-1}$ are internal vertices; the vertices $x_1$ and $x_k$ are endpoints of the path $p$.

\begin{definition}
 A path is \emph{closed} if both of its endpoints are leaves, and a path is \emph{open} if one of its endpoints is not a leaf. 
 We assume a singleton path at a leaf of the tree to be an open path.
 \end{definition}

\begin{definition}\label{Co-Ex}
     Let $v$ be a vertex in the tree $T$ and $\mathcal{P}$ be a set of paths in the subtree rooted at $v$.
     For instance $p_1=(a_1,\dots,a_j)$ and $p_2=(b_1,\dots,b_k)$ are two paths in $\mathcal{P}=\{p_1,p_2,\dots\}$ 
     where $a_1$ and $b_1$ are leaves of $T$.
    We define the following operations:
    \begin{itemize}[nolistsep]
        \item We use the notation $\mathcal{P}\cdot v$ to \emph{concatenate} the paths in $\mathcal{P}$ with vertex $v$, that is, For each path $p\in \mathcal{P}$ if there exists an edge between $v$ and the endpoint of $p \in \mathcal{P}$, then we add vertex $v$ to the path $p$.
        As an example $\mathcal{P}\cdot v=\{(a_1,\dots,a_j,v),(b_1,\dots,b_k,v),\dots\}$.
        If $\mathcal{P}=\{p\}$ for a single path $p$, we write $p\cdot v$ as shorthand for $\mathcal{P}\cdot v$.
        \item Two vertex-disjoint open paths $p_1=(a_1,\dots,a_j)$ and $p_2=(b_1,\dots,b_k)$ are \emph{combined} at the vertex $v$ if $v$ is connected to one of the endpoints of each path.
        By combining $p_1$ and $p_2$ at vertex $v$, we obtain a new path, formally, $comb(p_1,p_2)=(a_1,\dots,a_j,v,b_k,\dots,b_1)$.   
    \end{itemize}
\end{definition}

Let $T_v$ be the subtree of $T$ rooted at a given vertex $v$. Let $\{T_v^1, T_v^2, \ldots\}$ be the set of connected components of $T_v \setminus \{v\}$. Two open paths coming from these connected components to $v$ can only be combined at $v$ if they are coming from two different components from $\{T_v^1, T_v^2, \ldots\}$. Since, for the combining operation, the paths need to be vertex-disjoint, the open paths coming from the same component to $v$ can not be combined at $v$. Two open paths are called \emph{unrelated} if they come from two different components from $\{T_v^1, T_v^2, \ldots\}$ at $v$.

In the algorithm, for each vertex $v \in T$, we define three sets of paths: 
\begin{itemize}[nolistsep]
    \item $\mathcal{P}^{\emph{close}}_v$ is a set of paths that consists of \emph{closed} paths combined at vertex $v$,
    \item $\mathcal{P}^{\emph{open}}_v$ is a set of paths that consists of \emph{open} paths that will be extended further from vertex $v$,
    \item $\mathcal{P}_v$ is the set of all paths that are extended as open paths from the children of $v$.
\end{itemize} 
Among these sets, we need to store the set $\mathcal{P}^{\emph{open}}_v$ corresponding to each node for further usage in the parent node of $v$.

The aim of our algorithm is to have all paths closed, and when a path is open it means we extend the path until it gets combined and closed. 
Note that we mark a vertex covered when all the vertices of the subtree rooted at that vertex are visited and covered. Now, we present Algorithm~\ref{Alg-PCTL} to compute a path cover of $T$. The input to our algorithm is a tree $T$, a designated root vertex $r$, and a solution set $S$, which is initially $\emptyset$. After the execution of Algorithm~\ref{Alg-PCTL}, the solution set $S$ provides an optimal path cover of $T$. We store a path $p = (a_1, a_2, \ldots, a_j)$ in terms of its endpoints i.e. $p = (a_1,a_j)$.

\begin{algorithm}[H]
\caption{$PC(T,v,S)$\label{Alg-PCTL}}
\begin{algorithmic}[1]
\State Mark $v$ as visited
\For{Each child $u$ of $v$}
\If{$u$ is unvisited} 
\State recursively call $PC(T,u,S)$
\EndIf
\EndFor
\If{$v$ is a leaf}
\State $\mathcal{P}^{\emph{open}}_v \longleftarrow \{(v,v)\}$ \hspace{0.5cm} and \hspace{0.5cm} $\mathcal{P}^{\emph{close}}_v \longleftarrow \emptyset$
\Else
\State $\mathcal{P}_v\longleftarrow \bigcup \limits_{c} \mathcal{P}^{\emph{open}}_c$ \Comment{taking a union over all the paths coming from all children $c$ of $v$}
\State $\mathcal{P}^{\emph{close}}_v \longleftarrow \emptyset$
\While{$|\mathcal{P}_{v}| > 2$ }
\State Find two unrelated paths $p_i$ and $p_j$
\State $\mathcal{P}_v^{\emph{close}}\longleftarrow \mathcal{P}_v^{\emph{close}} \bigcup \emph{comb}(p_i,p_j)$
\State $\mathcal{P}_v \longleftarrow \mathcal{P}_v \setminus \{p_i,p_j\}$
\EndWhile
\If{$v$ is the root of $T$ and $|\mathcal{P}_v| = 2$}
\State $\mathcal{P}_v^{\emph{close}}\longleftarrow \mathcal{P}_v^{\emph{close}} \bigcup \emph{comb}(p_1,p_2)$
\ElsIf{$v$ is the root of $T$ and $|\mathcal{P}_v| = 1$}
\State $\mathcal{P}_v^{\emph{close}} \longleftarrow \mathcal{P}_v^{\emph{close}} \bigcup p_1 \cdot v$
\Else
\State $\mathcal{P}_v^{\emph{open}} \longleftarrow \mathcal{P}_v \cdot v$ \Comment{when $v$ is not the root of $T$ and $|\mathcal{P}_v| \leq 2$}
\EndIf
\EndIf
\State Mark $v$ as covered
\State $S \longleftarrow S \bigcup \mathcal{P}_v^{\emph{close}}$
\State \Return $\mathcal{P}^{\emph{open}}_v$
\end{algorithmic}
\end{algorithm}
The algorithm begins its search from the root, marks each vertex \emph{visited}, and recursively calls itself until it reaches a leaf (say $v$) [line 1-6]. 
The algorithm starts a singleton path $(v,v)$ from the leaf and extends the path to its parent as an open path (line 8) and marks 
$v$ as \emph{covered}. 
  
During the backtracking process, at each visited vertex $v$, the algorithm marks $v$ as \emph{covered} when all its children are visited and \emph{covered}. 
For each vertex $v$, other than a leaf and the root of $T$, $\mathcal{P}_v$ contains all the open paths coming from its children. It is clear from line 22 that each child can forward at most two open paths to its parent. Therefore if $|\mathcal{P}_v| \geq 3$, we will always get two unrelated paths coming from two different children of $v$ that can be combined to create a closed path. In the while loop in (12-16) we precisely do it.
The algorithm constructs $\mathcal{P}^{\emph{close}}_v$ by combining all possible pairs of unrelated paths at $v$ and adding them to the final solution set $S$ (line 26).
 
The steps the algorithm follows for the root $r$ of $T$ are the same as any other internal vertex, except all the open paths are closed at this point. 
If the number of open paths in $\mathcal{P}_r$ is even, all the paths are combined at $r$ and closed (line 18). Note that $r$ is an internal vertex of $T$. Therefore, the number of children of $r \geq 2$. Hence, it is always possible to close all pairwise unrelated paths at $r$. 
If the number of paths in $\mathcal{P}_r$ is odd, we make $r$ as the endpoint of the only remaining open path, which cannot be combined with any other path (line 20). We are now ready to prove Theorem~\ref{thm:PCTL}.

\begin{proof}[Proof of Theorem~\ref{thm:PCTL}]
    First, we have to prove that the size of $S$ is $\big \lceil \frac{\ell}{2} \big \rceil$ where the number of leaves of $T$ is $\ell$. We start a path in the leaf of $T$ and keep it as an open path as long as it is not combined with another path and closed. When a path is closed, both of its endpoints become leaves of $T$.  
    Additionally, in the root, all the paths get closed except for possibly one path. Hence, for an even number of leaves, the size of $S$ is $\frac{\ell}{2}$ and for an odd number of leaves, the size of $S$ is $\frac{\ell - 1}{2}+1$, which proves the first part of the theorem.

    Now, we show that the running time of Algorithm~\ref{Alg-PCTL} is $O(n)$ where $n$ is the number of vertices of $T$. Algorithm~\ref{Alg-PCTL} closely resembles the DFS algorithm, with the addition of some constant time operations to store the paths covering the vertices.
    Two operations, $\emph{comb}(p,q)$ and $\mathcal{P}_v \cdot v$ take $O(1)$ time as we store the paths by their endpoints, and after each of these operations, only the endpoints are changed while the new paths are created. 
    Finding a pair of unrelated paths can also be done in $O(1)$ time, as the children of a particular vertex can be ordered from left to right, and the paths coming from the children can also be ordered accordingly. For each path $p$, we just need to check at most two consecutive paths in this ordering to find a path $q$ which is unrelated to $p$.
    Therefore, in each of the iterations of the while loop (line 12-16) we do the operations in $O(1)$ time. All the iterations of the while loop together traverse each of the edges between $v$ and its children $O(1)$ time. These edges are only traversed in $v$'s recursive call. 
  The recursive $PC(T,v,S)$ call is made only once for each vertex. Therefore, the recursive calls altogether make the running time of the algorithm $O(n+m)$, where $m$ is the number of edges of $T$ (since each edge of $T$ is traversed $O(1)$ times), which is same as the running time of DFS.
  As $m$ is $O(n)$ for $T$, the total running time becomes $O(n)$, which proves the second part of the theorem.
\end{proof}

\section{\pc on graphs of bounded treewidth}\label{sec:tw}

In this section, we present an algorithm that solves \pc on general graphs in XP time when parameterized by treewidth. The algorithm is a classic dynamic programming scheme over a tree decomposition.

\begin{theorem}\label{theoxp}
    \pc can be solved in time $\kappa^{t^{O(t)}}\cdot n $ on graphs of treewidth $t$ and order $n$, where $\kappa$ is  the maximum allowed solution size. 
\end{theorem}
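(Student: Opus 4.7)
The plan is to run dynamic programming on a nice tree decomposition $\mathcal{T}$ of $G$ of width at most $t$; such a decomposition can be obtained in time $f(t)\cdot n$ for some function $f$. For each bag $X$, I would tabulate the minimum value of $c + |\mathcal{T}|$ over all partial path covers of the processed subtree that realise a given \emph{signature} recording how the cover interacts with $X$. A signature consists of three pieces: (i) a multiset $\mathcal{T}$ of \emph{annotated traces}, where a trace is an ordered sequence of distinct vertices of $X$ recording, for one currently live path of the partial cover, the bag vertices it visits in order, and each endpoint of the trace carries a binary label indicating whether the path terminates at that endpoint (\emph{closed}) or continues to a vertex outside $X$ (\emph{open}); (ii) a set $U \subseteq X$ of bag vertices not yet covered by any trace; and (iii) an integer $c$ counting the paths whose vertex set lies entirely outside $X$.

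The transitions would follow the standard four operations of a nice decomposition. At an \emph{introduce-vertex} bag, the new vertex would join $U$ without touching any trace. At an \emph{introduce-edge} bag $\{u,v\}$ one enumerates the constantly many roles this edge may play: unused; starting a new singleton path at $u$ or $v$; extending an existing trace whose open endpoint sits at $u$ or $v$; or merging two traces with matching open endpoints at $u$ and $v$. At a \emph{forget} bag for a vertex $v$, we require $v \notin U$, and for every trace containing $v$: if $v$ is internal we simply drop it from the trace (the path still bridges through $v$ via forgotten vertices); if $v$ is an endpoint of the trace we may drop it only after closing that endpoint, possibly collapsing the trace to an empty sequence and incrementing $c$. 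At a \emph{join} bag we would combine two child signatures $(\mathcal{T}_1, U_1, c_1)$ and $(\mathcal{T}_2, U_2, c_2)$ by setting $U = U_1 \cap U_2$, $c = c_1 + c_2$, and stitching traces from the two sides according to compatible open endpoints on shared bag vertices. At the root, signatures with no open endpoint, $U = \emptyset$, and $c + |\mathcal{T}| \le \kappa$ would be accepted.

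The claimed running time follows from bounding the number of signatures per bag. An annotated trace is an ordered subset of the bag $X$ of size at most $t+1$ together with two endpoint labels, giving $4\sum_{k=1}^{t+1}\binom{t+1}{k}k! = t^{O(t)}$ distinct annotated traces. Since at most $\kappa$ traces can coexist in a signature, the number of multisets of traces is $\binom{\kappa + t^{O(t)}}{t^{O(t)}} \le \kappa^{t^{O(t)}}$; the further $2^{t+1}$ choices for $U$ and $\kappa+1$ choices for $c$ are absorbed into the same bound. Each transition can be carried out in time polynomial in this signature space, and the join transition in particular amounts to enumerating all valid stitchings of pairs of compatible child signatures. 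Summing over the $O(n)$ bags of $\mathcal{T}$ gives the overall bound $\kappa^{t^{O(t)}} \cdot n$.

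The main obstacle I expect is the correct bookkeeping at the \emph{join} bag: two independent partial covers must be glued into one coherent cover whose pieces are simple paths, so at each shared bag vertex every open endpoint contributed by one side must be matched with at most one open endpoint from the other, and no stitching may accidentally close a trace into a cycle by identifying its two ends with each other or produce branching when the same path is tracked on both sides. Storing the full \emph{ordered} trace of each live path through the bag, rather than only the set of bag vertices it touches, is precisely what makes this matching well-defined. Once the stitching rule is carefully formulated, correctness follows by a routine induction over the decomposition, and optimality follows since the DP minimises over all signatures consistent with the subtree.
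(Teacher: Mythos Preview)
Your incremental, introduce--edge style dynamic programming is a reasonable alternative to the paper's scheme, but there is a genuine gap in the running-time argument. You assert that ``at most $\kappa$ traces can coexist in a signature'', and use this to bound the number of multisets. In your model, however, a trace is a \emph{connected} piece of a solution path inside $G_v$, and a single solution path can break into several such pieces: whenever the path leaves $V_v$ through one bag vertex and re-enters through another, or uses an edge between two bag vertices whose introduce--edge node lies outside the current subtree, you get a new piece. Concretely, take the $4$-cycle $a\!-\!b\!-\!c\!-\!d\!-\!a$ with a decomposition whose join bag is $\{a,c\}$; the Hamiltonian path $a\!-\!b\!-\!c\!-\!d$ restricted to the child containing $d$ has two components, $\{a\}$ and $\{c,d\}$, hence two traces although $\kappa=1$. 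If you prune signatures to at most $\kappa$ traces, you will miss optimal solutions; if you do not, you need a different bound. The correct bound is $O(\kappa\,t)$ pieces (each solution path contributes at most $O(t)$ components, since every break uses up a bag vertex), which still yields $\kappa^{t^{O(t)}}$ states, but this has to be argued and then fed into the join, where one path in the parent may arise from stitching $\Theta(t)$ child traces --- your sketch of the join does not handle this.

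The paper sidesteps the whole issue by a different encoding: each vertex of a partial path carries a type from a $10$-element set recording whether each of its two path-neighbours lies in the bag, in an already-forgotten part, or in a not-yet-seen part (the symbol~$\uparrow$). Because ``future'' neighbours are guessed, a partial path at a node stays in one-to-one correspondence with a solution path of $G$, never splits, and the multiplicity is therefore genuinely bounded by~$\kappa$; the join then reduces to matching identical ordered bag-sequences with compatible types, with no stitching at all. A minor further point: your open/closed labels are used inconsistently --- by your definition ``open'' means the path already continues to a vertex outside the bag, so extending or stitching at an open endpoint would create a vertex of degree three; the transitions you describe only make sense if ``open'' instead means ``endpoint still available for extension''.
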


Since we can always assume that $\kappa\leq n$, we have the following corollary.

\begin{corollary}\label{corxp}
\pc can be solved in time $n^{t^{O(t)}}$, where $n$ is the number of vertices, and $t$ is the treewidth of the input graph.    
\end{corollary}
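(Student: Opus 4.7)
The plan is to run a dynamic program over a nice tree decomposition of $G$ of width $t$. For each node $i$ of the decomposition with bag $X_i$, let $V_i$ denote the set of vertices introduced in its subtree. A partial solution at $i$ is a set of path \emph{fragments} in $G[V_i]$, each fragment being a simple subpath of some path of the eventual cover. Each endpoint of a fragment is either \emph{closed} (a true endpoint of its covering path) or \emph{open} (to be continued via a vertex outside $V_i$, hence it must lie in $X_i$). Each fragment that uses at least one bag vertex is summarised by its \emph{interface trace}: the ordered sequence of distinct bag vertices it visits along the fragment, together with the status (open, closed-at-a-bag-vertex, or closed-at-an-internal-vertex-of-$V_i\setminus X_i$) of each of its two ends. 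Since such a trace is a non-empty sequence of distinct elements of $X_i$ of length at most $t+1$, there are at most $\sum_{k=1}^{t+1}(t+1)!/(t+1-k)!\le e\cdot(t+1)!$ such sequences; multiplied by the constant number of endpoint-status labels, the total size of the set $\mathcal{T}$ of trace types is $|\mathcal{T}|=t^{O(t)}$.

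The DP state at bag $X_i$ is a pair $(\sigma,c)$, where $\sigma\colon\mathcal{T}\to\{0,\dots,\kappa\}$ records the multiplicity of each trace type among the active fragments and $c\in\{0,\dots,\kappa\}$ counts the fragments lying entirely in $V_i\setminus X_i$ (i.e., paths already completed and no longer visible from the bag). For each state we store whether it is realisable; the optimal value is obtained at the root of the decomposition by minimising $c$ over realisable states with $\sigma\equiv 0$. The number of states per bag is thus bounded by $(\kappa+1)^{|\mathcal{T}|}\cdot(\kappa+1)=\kappa^{t^{O(t)}}$.

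Transitions follow the four standard kinds of nice-tree-decomposition nodes. At an introduce node adding $v$ (together with its edges to the child's bag), we branch over the role(s) $v$ plays: $v$ may start several singleton fragments, be attached to open endpoints of existing fragments via its new edges, be used as the middle vertex joining two fragment ends into one, or any multiplicity-respecting combination of these (since \pc allows overlap, several fragments may go through $v$). A forget node for $v$ forces every open endpoint at $v$ to close, deletes $v$ from every surviving trace, and moves fragments whose trace becomes empty into the counter $c$. A join node combines two children with identical bags by matching, at each shared bag vertex, the fragment-end visits from the two sides so that the glued fragments form valid simple paths. In every case the number of (state, transition) pairs is bounded by a function of $t$ alone times the state-space size, so processing one bag takes time $\kappa^{t^{O(t)}}$; with $O(n)$ bags the total running time is $\kappa^{t^{O(t)}}\cdot n$, matching the claim.

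The main obstacle is the join step: unlike in \pp, a single bag vertex may belong to several path fragments simultaneously, so at each shared bag vertex of a join the fragment-end visits from the two children can interleave in many ways. One must enumerate the compatible matchings carefully, forbid those that would create cycles or violate the ordering recorded in each interface trace, and ensure that every realisable cover is counted exactly once by the DP. This matching-enumeration, together with the need to track the ordering of bag vertices along each fragment, is precisely what drives the $(t+1)!$-like factor and hence the $t^{O(t)}$ exponent.
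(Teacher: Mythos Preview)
Your proposal is correct and follows essentially the paper's approach: a dynamic program over a nice tree decomposition whose states record, for each bag, the ordered traces of partial solution paths through the bag together with their multiplicities (needed because distinct covering paths may share the same bag-intersection), yielding $\kappa^{t^{O(t)}}$ states per node and a $\kappa^{t^{O(t)}}\cdot n$ running time. In the paper this bound is Theorem~\ref{theoxp}, and the corollary is then obtained in one line by observing $\kappa\le n$; you should add that final substitution to arrive at the stated $n^{t^{O(t)}}$ bound.
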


\subsection{Preliminaries}

Here, we give the definitions of \emph{tree decomposition}, \emph{treewidth}, and a well-structured form of it, a \emph{nice tree decomposition} from~\cite{kloks1994treewidth}.

\begin{definition}
 Let $G$ be an undirected graph. A tree decomposition of $G=(V,E)$ is a pair $(\mathcal{T},X)$, where $\mathcal{T}$ is a tree and $X:V(\mathcal{T})\to2^{V}$ satisfies the following conditions:
\begin{itemize}[nolistsep]
    \item for each $e\in E$ there exists a node $v\in  V(\mathcal{T})$ such that both endpoints of $e$ are inside $X_v$, a bag associated with $v$.
    \item for each $u\in V$ the set $\{v\in V(\mathcal{T}):u\in X_v\}$ is connected in $\mathcal{T}$.
\end{itemize}
The \emph{width} of $(\mathcal{T},X)$ is $max_{v\in V(\mathcal{T})}|X_v|-1$. 
The \emph{treewidth} of a graph $G$, denoted by $t$, is the minimum width of all possible tree decompositions of $G$. 
\end{definition} We distinguish between vertices of $G$ and vertices of the tree decomposition by referring to the latter as \emph{nodes}. Each node $v$ is associated with a \emph{bag} $X_v$: a subset of vertices of graph $G$. We use the classic \emph{nice tree decompositions}:

\begin{definition}[\cite{kloks1994treewidth}]\label{nicetree}
    A tree decomposition $\mathcal{T}$ of graph $G$ is \emph{nice} if:
    \begin{itemize}[nolistsep]
\item $\mathcal{T}$ is rooted at node $r$ with $X_r=\emptyset$.
    \item Each leaf node $v$ of $\mathcal{T}$ has an empty bag $X_{v}=\emptyset$.
        \item Each non-leaf node is of one of the types below:
        \begin{itemize}
            \item[1.]An \textbf{Introduce node} $v$ has exactly one child $u$ such that $X_v=X_u\dot{\cup} \{x\}$ for some vertex $x$ in $V$.
            \item[2.]A \textbf{Forget node} $v$ has exactly one child $u$ such that $X_v=X_u\setminus\{x\}$ for some vertex $x$ in $V$.
            \item[3.]A \textbf{Join node} $v$ has two children $u_1$, $u_2$ such that $X_v=X_{u_1}=X_{u_2}$.
        \end{itemize}
    \end{itemize}
\end{definition}
Given a graph $G$ on $n$ vertices and an integer $t$, there is an algorithm that either outputs a tree decomposition of $G$ of width at most $2t+1$, or determines that the treewidth of $G$ is larger than $t$, in time $2^{O(t)}n$~\cite{DBLP:conf/focs/Korhonen21}. Given a tree decomposition of $G$ of width $t$ and $O(n)$ nodes, a nice tree decomposition of width $t$ with at most $4n$ nodes can be constructed in time $O(t\cdot n)$~\cite{bodlaender2013fine}.

Let $G=(V,E)$ be a graph and $(\mathcal{T},X)$ be a nice tree decomposition of $G$ with width at most $t$. Let $\mathcal{T}_v$ be the subtree of $\mathcal{T}$ rooted at node $v$ and $V_v$ be the union of all bags in this subtree including $X_v$.
We define $G_v$ as the subgraph of $G$ induced by $V_v$.
We will use the following lemma from~\cite{Paralg}.

\begin{lemma}
\label{sepa}
    Let $(\mathcal{T},X)$ be a tree decomposition of a graph $G$ and let $uv$ be an edge of $\mathcal{T}$.
    The forest $\mathcal{T}-uv$ obtained from $\mathcal{T}$ by deleting edge $uv$ consists of two connected components $\mathcal{T}_u$ and $\mathcal{T}_v$.
    Then $(V_u,V_v)$ is a separation of $G$ with separator $X_u\cap X_v$.
\end{lemma}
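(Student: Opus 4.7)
The plan is to prove the three standard conditions defining a separation: (i) $V_u \cup V_v = V(G)$, (ii) $V_u \cap V_v \subseteq X_u \cap X_v$ (the reverse inclusion being immediate), and (iii) every edge of $G$ has both endpoints inside $V_u$ or both inside $V_v$. The whole argument relies only on the two axioms of a tree decomposition together with the elementary fact that removing an edge from a tree disconnects it into exactly two subtrees.

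First I would recall that since $\mathcal{T}$ is a tree, $\mathcal{T}-uv$ has exactly two connected components, one containing $u$ (call it $\mathcal{T}_u$) and one containing $v$ (call it $\mathcal{T}_v$), and these partition $V(\mathcal{T})$. Because every vertex of $G$ appears in at least one bag, and every bag lives in either $\mathcal{T}_u$ or $\mathcal{T}_v$, we immediately obtain $V_u \cup V_v = V(G)$, proving (i).

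For (ii), take any $x \in V_u \cap V_v$. Then $x$ lies in some bag $X_{w_1}$ with $w_1 \in V(\mathcal{T}_u)$ and in some bag $X_{w_2}$ with $w_2 \in V(\mathcal{T}_v)$. The unique path in $\mathcal{T}$ from $w_1$ to $w_2$ must traverse the edge $uv$, so it passes through both $u$ and $v$. By the connectivity axiom of a tree decomposition, the set $\{w \in V(\mathcal{T}) : x \in X_w\}$ is connected, hence contains this entire path; in particular $x \in X_u$ and $x \in X_v$. Thus $V_u \cap V_v \subseteq X_u \cap X_v$, and the reverse inclusion is trivial since $X_u \subseteq V_u$ and $X_v \subseteq V_v$.

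For (iii), take any edge $xy \in E(G)$. By the first axiom there exists a node $w \in V(\mathcal{T})$ with $\{x,y\} \subseteq X_w$. Since $w$ lies in either $\mathcal{T}_u$ or $\mathcal{T}_v$, both $x$ and $y$ lie in $V_u$ or both lie in $V_v$, so no edge of $G$ has one endpoint in $V_u \setminus V_v$ and the other in $V_v \setminus V_u$. Combined with (i) and (ii), this shows $(V_u, V_v)$ is a separation of $G$ whose separator is exactly $X_u \cap X_v$. I do not anticipate a real obstacle here; the only delicate point is the use of the connectivity axiom in step (ii), and one just has to be careful to invoke it on the path between $w_1$ and $w_2$ rather than on $u,v$ directly.
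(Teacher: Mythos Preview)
Your proof is correct and is precisely the standard argument for this well-known fact about tree decompositions. Note, however, that the paper does not actually give its own proof of this lemma: it is stated without proof and attributed to~\cite{Paralg}, so there is no ``paper's proof'' to compare against. Your argument is exactly the one found in that reference (and in essentially every textbook treatment), so nothing further needs to be said.
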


\subsection{The dynamic programming scheme}

We implement our dynamic programming scheme for \pc in a bottom-up manner, starting at the leaf nodes of $\mathcal{T}$.
At each node $v$, we deal with a potential solution for $G$ that covers the vertices of $X_v$.
We define a path as a sequence of vertices where any vertex in a path $p$ is incident to at most two vertices from $p$, giving at most two neighbours in the subgraph induced by $p$. 
Each neighbour of a vertex in $p$ can either belong to $G_v-X_v$ (and thus have been forgotten in the bag of some descendant node of $v$), which we represent with ``$\downarrow$'', or belong to $G-G_v$ (it will appear in the bag of some ancestor node of $v$), which we represent by ``$\uparrow$'', or belong to $X_v$, which we represent by ``$-$''.
Therefore, we characterize each vertex by the types of its neighbour(s) inside a potential solution path. 
To this end, let $\mathcal{N}$ be the set of multisets of size at most two whose elements are taken from the set $\{-,\uparrow,\downarrow\}$. 

At each node $v$, we define a \emph{partial path} $p$ (representing the intersection of a path of $G$ with the bag $X_v$) by $(\Pi_p,\varphi_p)$ where $\Pi_p$ is an ordered subset of $X_v$ and $\varphi_p$ is a function $\varphi_p:\Pi_p\to \mathcal{N}$ which describes, for each vertex of $p$, the types of its neighbour(s) inside the path of $G$ represented by $p$. 
These paths are called ``partial'' since they consist of several paths, and might have neighbour(s) in the path that lie outside the node. 
Each vertex $x$ of a partial path $p$ in $X_v$ has the possibility of being in either of the \emph{types} illustrated in Figure~\ref{fig:enter-label} with respect to the path $p'$ of $G$ represented by $p$. 
\begin{definition}
The set $\mathcal N$ of types of a vertex with respect to a partial path $p$ of a node $v$ are as follows:    
\begin{itemize}[nolistsep]
    \item[1.] $p=p'$ is a one-vertex path (and thus $x$ has no neighbours in $p'$); $x$ is assigned the type $\emptyset$.
    \item[2.] $x$ is an endpoint of $p'$ (with a unique neighbour in $p'$), which is inside the bag; it is assigned the type $\{-\}$.
    \item[3.] $x$ is an internal vertex of $p'$ with two neighbours in $p'$, both inside the bag; it is assigned the type $\{-,-\}$
    \item[4.] $x$ is an endpoint of $p'$ with a unique neighbour in $p'$, in an ancestor bag; it is assigned the type $\{\uparrow\}$.
    \item[5.] $x$ is an endpoint of $p'$ with a unique neighbour in $p'$, in a descendant bag; it is assigned the type $\{\downarrow\}$.
    \item[6.] $x$ is an internal vertex of $p'$ with two neighbours in $p'$, both in ancestor bags; it is assigned the type $\{\uparrow,\uparrow\}$.
    \item[7.] $x$ is an internal vertex of $p'$ with two neighbours in $p'$ in descendant bags; it is assigned the type $\{\downarrow,\downarrow\}$.
    \item[8.] $x$ is an internal vertex of $p'$ with a neighbour in $p'$ in a descendant bag and a neighbour in $p'$ in an ancestor bag; it is assigned the type $\{\downarrow,\uparrow\}$.
    \item[9.] $x$ is an internal vertex of $p'$ with a neighbour in $p'$ in an ancestor bag and a neighbour in $p'$ inside the bag itself; it is assigned the type $\{\uparrow,-\}$.
    \item[10.] $x$ is an internal vertex of $p'$ with a neighbour in $p'$ in a descendant bag and a neighbour in $p'$ inside the bag itself; it is assigned the type $\{\downarrow,-\}$.
\end{itemize}
\end{definition}

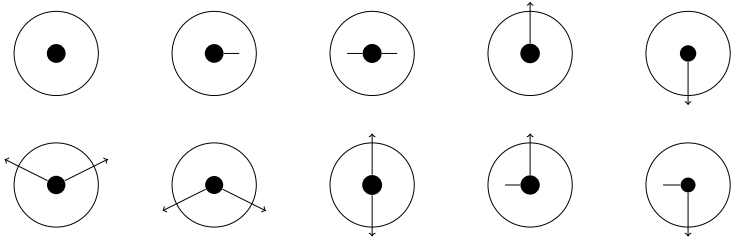
\begin{figure}[H]
\center
    \scalebox{0.7}{\begin{tikzpicture}
    \draw (0,0) circle (.8cm);
    \draw (3,0) circle (.8cm);
    \draw (6,0) circle (.8cm);
    \draw (9,0) circle (.8cm);
    \draw (0,-2.5) circle (.8cm);
    \draw (3,-2.5) circle (.8cm);
    \draw (6,-2.5) circle (.8cm);
    \draw (9,-2.5) circle (.8cm);
    \draw (12,-2.5) circle (.8cm);
    \draw (12,0) circle (.8cm);
      \node[circle, fill=black, inner sep=0pt] (A) at (0,0) {A};
      \node[circle, fill=black, inner sep=.1pt] (B) at (3,0) {B};
      \node[circle, fill=black, inner sep=.1pt] (C) at (6,0) {C};
      \node[circle, fill=black, inner sep=.1pt] (D) at (9,0) {D};
      \node[circle, fill=black, inner sep=.1pt] (E) at (0,-2.5) {E};
      \node[circle, fill=black, inner sep=.1pt] (F) at (3,-2.5) {F};
      \node[circle, fill=black, inner sep=.1pt] (G) at (6,-2.5) {G};
      \node[circle, fill=black, inner sep=.1pt] (H) at (9,-2.5) {H};
      \node[circle, fill=black, inner sep=.1pt] (I) at (12,-2.5) {I};
      \node[circle, fill=black, inner sep=.1pt] (J) at (12,0) {J};

      \node[circle, fill=white, inner sep=.1pt] (1) at (3.5,0) {};
      \node[circle, fill=white, inner sep=.1pt] (2) at (5.5,0) {};
      \node[circle, fill=white, inner sep=.1pt] (3) at (6.5,0) {};
      \node[circle, fill=white, inner sep=.1pt] (4) at (9,1) {};
      \node[circle, fill=white, inner sep=.1pt] (5) at (12,-1) {};
      \node[circle, fill=white, inner sep=.1pt] (6) at (-1,-2) {};
       \node[circle, fill=white, inner sep=.1pt] (7) at (1,-2) {};
      \node[circle, fill=white, inner sep=.1pt] (8) at (2,-3) {};
      \node[circle, fill=white, inner sep=.1pt] (9) at (4,-3) {};
      \node[circle, fill=white, inner sep=.1pt] (10) at (6,-1.5) {};
      \node[circle, fill=white, inner sep=.1pt] (11) at (6,-3.5) {};
      \node[circle, fill=white, inner sep=.1pt] (12) at (9,-1.5) {};
      \node[circle, fill=white, inner sep=.1pt] (13) at (8.5,-2.5) {};
      \node[circle, fill=white, inner sep=.1pt] (14) at (11.5,-2.5) {};
      \node[circle, fill=white, inner sep=.1pt] (15) at (12,-3.5) {};

       \draw (1) -- (B);

       \draw  (2) -- (C);
       \draw (C) -- (3);

       \draw[->] (D) -- (4);

       \draw[->] (J) -- (5);

       \draw[->] (E) -- (6);
       \draw[->] (E) -- (7);

       \draw[->] (F) -- (8);
       \draw[->] (F) -- (9);

       \draw[->] (G) -- (10);
       \draw[->] (G) -- (11);

       \draw[->] (H) -- (12);
       \draw (H) -- (13);

       \draw[->] (I) -- (15);
       \draw (I) -- (14);

\end{tikzpicture}}
    \label{fig:enter-label}
    \caption{Illustration of the different types of a vertex with respect to some node $v$ and partial path. The set of types is $\mathcal{N}=\{\emptyset, \{-\},\{-,-\},\{\uparrow\},\{\downarrow\},\{\uparrow,\uparrow\},\{\downarrow,\downarrow\},\{\downarrow,\uparrow\},\{\uparrow,-\},\{\downarrow,-\}\}$. ``$-$'' shows a vertex with a neighbour inside the bag $X_v$, ``$\uparrow$'' shows a vertex with a neighbour in $G-G_v$, and ``$\downarrow$'' shows a vertex with a neighbour in $G_v-X_v$.} \label{vertexposition}
\end{figure}

We say that a partial path for node $v$ is \emph{consistent} if it corresponds to the intersection with $X_v$ of some path of $G$.

Now, we extend the notion of a partial path $p$ of a node $v$ to that of a partial path $p$ of a subgraph $G_v$ in a natural way as follows. A partial path $p$ of $G_v$ is an ordered subset $\Pi_{p}$ of $V_v$ and a function $\varphi_p:\Pi_p\to \mathcal{N}$ which describes, for each vertex of a path, the types of its neighbour(s) inside the path, where ``$\uparrow$'' now refers to a neighbour in $V(G)\setminus V_v$, ``$-$'' refers to a neighbour in $V_v$, and there is no vertex of $p$ whose type contains ``$\downarrow$''.


We say that a partial path $p$ of $v$ \emph{agrees} with a partial path $p'$ of $G_v$ if the following conditions hold:
\begin{itemize}[nolistsep]
        \item $\Pi_p\subseteq \Pi_{p'}$;
        \item the order of the vertices in $\Pi_p\cap \Pi_{p'}$ is the same in $\Pi_p$ and $\Pi_{p'}$;
        \item all vertices in $\Pi_{p'}\setminus \Pi_{p}$, have a type in $\{-\},\{-,-\},\emptyset$;
        \item  $\varphi_p(x)=\varphi_{p'}(x)$, for each vertex $x$ with types $\emptyset$, $\{-\}$, $\{-,-\}$, $\{\uparrow\}$, and $\{\uparrow,-\}$;
        \item If $\varphi_p(x)=\{\downarrow\}$ then $\varphi_{p'}(x)=\{-\}$;
        \item If $\varphi_p(x)=\{\uparrow,\uparrow\}$ then $\varphi_{p'}(x)=\{\uparrow,\uparrow\}$;
        \item If $\varphi_p(x)=\{\downarrow,\downarrow\}$ then $\varphi_{p'}(x)=\{-,-\}$;
        \item If $\varphi_p(x)=\{\uparrow,\downarrow\}$ then $\varphi_{p'}(x)=\{\uparrow,-\}$;
        \item If $\varphi_p(x)=\{\downarrow,-\}$ then $\varphi_{p'}(x)=\{-,-\}$.
\end{itemize}

For a node $v$, we define a \emph{partial solution} $S$ of $G_v$ as a collection $P$ of partial paths of $G_v$ whose vertices cover all the vertices in $V_v$. We require that every partial path is consistent, that is, it may correspond to the intersection of a path of $G$ with $G_v$. We also require that any partial path that has no vertex whose type contains ``$\uparrow$'', is unique (i.e. appears only once in $S$). However, partial paths with vertices whose type contains ``$\uparrow$'' may appear multiple times (at most $n$ times). Indeed, they might correspond to different paths in a path cover of $G$, that all happen to have the same intersection with $G_v$. For notational convenience, a partial solution $S$ is stored as a set $P$ of partial paths, and a function $f$ where, for every partial path $p$ in $S$, we let $f(p)\in\{1,\ldots,n\}$ be the number of times the partial path $p$ appears in the partial solution $S$.

Let $P$ be a set of partial paths of a node $v$ together with a function $f:P\to\{1,\ldots,n\}$, and $S=(P',f')$ be a partial solution of $G_v$. Let $M$ be the multiset obtained from $P$ by adding, for every partial path $p$ of $P$, $f(p)$ copies of $p$ to $M$. Similarly, let $M'$ be the multiset obtained from $P'$ obtained from the subset of partial paths of $P'$ that intersect $X_v$, by adding to $M'$, $f'(p)$ copies of each such path $p$ of $P'$. We say that $P$ \emph{corresponds to $S$} if there is a bijection $\psi$ from the $M$ to $M'$, such that $p$ agrees with $\psi(p)$ for every partial path $p$ of $P$.


We are now ready to describe the dynamic programming scheme.

\paragraph{Dynamic Programming} We define a DP-state $[v,P,f]$ as follows:
\begin{itemize}[nolistsep]
    \item $v$ is a node in a nice tree decomposition $(\mathcal{T},X)$; 
    \item $P$ is a set of partial paths that covers the vertices of $X_v$;
    \item $f$ is a function $f:P\to\{1,\dots,n\}$ that maps each partial path $p$ in $P$ to a value that shows how many times $p$ is used.
\end{itemize}

We say that a DP-state $[v,P,f]$ \emph{valid} if every partial path in $P$ is consistent, and $(P,f)$ corresponds to a partial solution $S$ of $G_v$.

For a valid DP-state, we define $opt[v,P,f]$ as the minimum number of partial paths in a partial solution of $G_v$ corresponding to $(P,f)$. If there exists no such partial solution, then the DP-state is invalid, and $opt[v,P,f]=\infty$.

Now, we explain how to compute the value of a DP-state from the values of the children's DP-states.
We need to check the compatibility condition between DP-states of a node and the children's DP-sates; this is specific for each node type. Therefore, we consider each case individually. 

Let $v$ be an introduce node with a child node $u$.
Two valid DP-states $[v,P,f]$ and $[u,P',f']$ are \emph{compatible} if there exist partial solutions $S_v$ of $G_v$ and $S_u$ of $G_u$ corresponding to $(P,f)$ and $(P',f')$ respectively, such that the intersection of $S_v$ with $G_u$ is $S_u$.
Let $v$ be a forget node with a child $u$.
Two valid DP-states $[v,P,f]$ and $[u,P',f']$ are \emph{compatible} if there exist partial solutions $S_v$ of $G_v$ and $S_u$ of $G_u$ corresponding to $(P,f)$ and $(P',f')$ respectively, such that the intersection of $S_u$ with $G_v$ is $S_v$. 

Let $v$ be a join node with two children $u_1$ and $u_2$.
Three valid DP-states $[v,P,f]$, $[u_1,P',f']$, and $[u_2,P'',f'']$ are \emph{compatible} if there exist partial solutions $S_u,S_{u_1},S_{u_2}$ of $G_v$, $G_{u_1}$ and $G_{u_2}$ corresponding to $(P,f)$, $(P',f')$ and $(P'',f'')$ respectively, such that the intersection of $S_v$ with $G_{u_1}$ is $S_{u_1}$ and the intersection of $S_v$ with $G_{u_2}$ is $S_{u_2}$.

The algorithm proceeds as follows. It starts at the leaf nodes and approaches the root. At each node, the algorithm computes a value for each possible DP-state using the values of the children's compatible DP-states, and chooses the minimum possibility. 
We next describe the computations done at each node, depending on their type.

\paragraph{Leaf node} For each leaf node $v$, $X_{v}=\emptyset$, so it is trivial that $opt[v, \emptyset,f]=0$. 

\paragraph{Introduce node} Let $v$ be an introduce node with child node $u$ such that $X_v=X_u\dot\cup \{x\}$ for some $x\in V(G)$. Let $[v,P,f]$ be a valid DP-state for node $v$.

Note that if there is a partial path $p\in P$ and $x$ with one of the values $\varphi_p(x)=\{\downarrow\}$, $\varphi_p(x)=\{\uparrow,\downarrow\}$, $\varphi_p(x)=\{-,\downarrow\}$, and $\varphi_p(x)=\{\downarrow,\downarrow\}$, then the DP-state is not valid and $opt[v,P,f]=\infty$. Indeed, by Lemma~\ref{sepa}, there is no edge between $x$ and a vertex in $V_v\setminus X_v$. 

Otherwise, we will look for all DP-states $[u,P',f']$ for $u$ compatible with $[v,P,f]$. 
To check whether $[u,P',f']$ and $[v,P,f]$ are compatible, we only need to check the type of $x$ in every partial path $p$ (and the type of its neighbour(s) in $p$), since the remaining state in $X_u$ must be the same as $X_v$. 

Let $P_{\overline{x}}$ be the set of partial paths of $P$ that contains at least one vertex other than $x$. We must find a bijection between the partial paths of $P_{\overline{x}}$ and those of $P'$ (taking into account their multiplicities), in the following way. For every partial path $p$ of $P$ not containing $x$ at all, we require that $p$ also belongs to $P'$, with $f(p)=f'(p)$. Moreover, for every partial path $p$ of $P$ containing $x$ and at least one other vertex of $X_v$, there must be a partial path $p'$ of $P'$ such that $p'$ agrees with $p$ after removing $x$. 

Let $C$ be the collection of all DP-states $[u,P',f']$ that are compatible with DP-state $[v,P,f]$ and let $k$ be the number of partial paths of $P$ (accounting for their multiplicity via function $f$) containing only $x$. We have:
    \[
    opt[v,P,f]=\min_{\substack{[u,P',f']\in C}}\{opt[u,P',f']\}+k
    \]

To justify the validity of this formula, note that (by induction hypothesis) our process constructs only valid partial solutions. Indeed, consider a partial solution $S'$ corresponding to $(P',f')$ of size $opt[u,P',f']$, where $[u,P',f']\in C$ is such that $opt[u,P',f']$ is minimum. We obtain a partial solution $S$ of size $|S'|+k$ corresponding to $(P,f)$ by extending $S'$. To do so, include vertex $x$ into the partial paths of $P'$ that correspond to partial paths of $P$ containing $x$, as explained in the compatibility check. Moreover, add as many singleton partial paths $x$ as needed. This shows that $opt[v,P,f]\leq |S|\leq |S'|+k=\min_{\substack{[u,P',f']\in C}}\{opt[u,P',f']\}+k$.

Conversely, note that an optimal partial solution $S$ of size $opt[v,P,f]$ corresponding to $(P,f)$ can be transformed into another one, $S'$, by deleting $x$ from it, with $|S'|=|S|-k$ and $S'$ corresponds to some $(P',f')$ where $[u,P',f']$ is compatible with $[v,P,f]$. This shows that $\min_{\substack{[u,P',f']\in C}}\{opt[u,P',f']\}\leq |S'|\leq opt[v,P,f]-k$ and thus, $opt[v,P,f]\geq \min_{\substack{[u,P',f']\in C}}\{opt[u,P',f']\}+k$.

\paragraph{Forget node} Let $v$ be a forget node with child $u$ such that $X_v=X_u\setminus \{x\}$ for some $x\in V$.
Let $[v,P,f]$ be a valid DP-state of $X_v$.

Again, we will look for all DP-states $[u,P',f']$ compatible with $[v,P,f]$. To do so, similarly as in an introduce node, we need to check the type of $x$ (and its neighbours in the bag) for every partial path $p$ of $[u,P',f']$. We observe that in $[u,P',f']$, $x$ cannot have a type containing ``$\uparrow$'' since by Lemma~\ref{sepa}, $x$ cannot be adjacent to a vertex of $V(G)\setminus V_v$, so this type of DP-states can be disregarded. For every partial path $p$ in $P$, we need to find a partial path $p'$ of $P'$ such that $p'$ agrees with $p$, that is, either $p=p'$, or $p'$ is obtained from $p$ by removing $x$. Moreover, the multiplicities of the partial paths of $P$ corresponding to those of $P'$ have to match. 
Moreover, the only partial paths of $P'$ not corresponding to any partial paths of $P$ are those containing only $x$; every other partial path of $P'$ must correspond to one of $P$.  

Let $C$ be the collection of all DP-states $[u,P',f']$ that are compatible with $[v,P,f]$. We have:
    \[
    opt[v,P,f]=\min_{\substack{[u,P',f']\in C}}\{opt[u,P',f']\}
    \]

As for the introduce node, the upper bound follows because one can construct a partial solution of $G_v$ corresponding to $[v,P,f]$ using the partial solutions for $G_u$ in $C$. The lower bound follows because one can, conversely, create a partial solution for $G_u$ corresponding to some DP-state of $C$ by using one for $G_v$ corresponding to $[v,P,f]$.

\paragraph{Join node} Let $v$ be a join node with children $u_1$ and $u_2$ such that $X_v=X_{u_1}=X_{u_2}$. Let $[v,P,f]$ be a valid DP-state for $v$.

As before, we will check the compatibility of $[v,P,f]$ with (valid) DP-states $[u_1,P',f']$ and $[u_2,P'',f'']$. Similarly as before, the DP-states $[v,P,f]$, $[u_1,P',f']$, and $[u_2,P'',f'']$ are compatible if there is a bijection $\psi_1$ between the partial paths of $P$ and those of $P'$ (again, accounting for their multiplicities, that is, we consider $f(p)$ copies of each partial path $p$), and a similar bijection $\psi_2$ between the partial paths of $P$ and those of $P''$. These bijections must be so that for any copy $p'$ of a partial path $p$ of $P$, $\psi_1(p')$ and $\psi_2(p')$ have the same vertices and the same orderings: $\Pi_p=\Pi_{\psi_1(p')}=\Pi_{\psi_2(p')}$. Moreover, for any vertex $x$ in a copy $p'$ of a partial path $p\in P$, if $\varphi_p(x)$ contains ``$\downarrow$", then together $\varphi_{\psi_1(p')}(x)$ and $\varphi_{\psi_2(p')}(x)$ must contain the same number of ``$\downarrow$" as $\varphi_p(x)$. If $\varphi_p(x)$ contains ``$\uparrow$'' or ``$-$'', then each of $\varphi_{\psi_1(p')}(x)$ and $\varphi_{\psi_2(p')}(x)$ must contain the same number of ``$\uparrow$'' and ``$-$'' as $\varphi_p(x)$. Finally, if $\varphi_p(x)=\emptyset$, also $\varphi_{\psi_1(p')}(x)=\varphi_{\psi_2(p')}(x)=\emptyset$.

Let $k$ be the number of partial paths of $P$ that intersect $X_v$, and let $C$ be the collection of all pairs of DP-states $([u_1,P',f'],[u_2,P'',f''])$ compatible with $[v,P,f]$. We have:
    \[
    opt[v,P,f]=\min_{\substack{([u_1,P',f'],[u_2,P'',f''])\in C}}\{opt[u_1,P',f']+opt[u_2,P'',f'']-k\}
    \]

To see the validity of the formula, as before, we argue that given two DP-states $[u_1,P',f']$, $[u_2,P'',f'']$ for $u_1$ and $u_2$ that are compatible with $[v,P,f]$, we can produce a partial solution that corresponds to $(P,f)$ by merging the partial solutions corresponding to $(P',f')$ and $(P'',f'')$. By Lemma~\ref{sepa}, since $X_v$ is a separator of $G$ and the paths of $P,P'$ and $P''$ agree on $X_v$, we indeed obtain a partial solution of $G_v$. This shows the upper bound of the formula.

For the lower bound, conversely, we argue that given any optimal partial solution $S$ corresponding to $(P,f)$, we can find two DP-states $[u_1,P',f']$, $[u_2,P'',f'']$ for $u_1$ and $u_2$ compatible with $[v,P,f]$ such that $S_1$ (the partial solution equal to $S$ restricted to $G_{u_1}$) corresponds to $[u_1,P',f']$, and $S_2$ (the partial solution equal to $S$ restricted to $G_{u_2}$) corresponds to $[u_2,P'',f'']$. Thus, the size of $S$ must be at least $|S_1|+|S_2|-k$. Hence, we have $opt[v,P,f]\geq opt[u_1,P',f']+opt[u_2,P'',f'']-k\geq \min_{\substack{([u_1,P',f'],[u_2,P'',f''])\in C}}\{opt[u_1,P',f']+opt[u_2,P'',f'']-k\}$.

\medskip

We are now ready to prove Theorem~\ref{theoxp}.

\begin{proof}[Proof of Theorem~\ref{theoxp}]
Recall that given the graph $G$ of order $n$ with treewidth $t$, one can compute a nice tree decomposition of width at most $2t+1$ with at most $4n$ nodes in time $2^{O(t)}n$~\cite{bodlaender2013fine,DBLP:conf/focs/Korhonen21}.

Our algorithm goes through this tree decomposition in a bottom-up fashion and computes, for each possible DP-state of the current node, the optimal value for this state using the children's optimal values. The correctness of this algorithm is given by the above discussion and the correctness of the inductive formulas. The optimal value of a solution is then obtained at the root. To obtain the actual path cover, one may use a standard backtracking procedure to build it inductively.

The running time is dominated by the generation, at each node, of all possible DP-states. By the preliminary discussions, we have $|X_v|\leq 2t+2$ for each node $v$ of the tree decomposition, if $G$ is of treewidth $t$. Let us count the number of possible DP-states $[v,P,f]$ for a node $v$. $P$ is a collection of partial paths that covers the vertices of $X_v$.
Each partial path in $P$ is an ordered subset of $X_v$, where each vertex has 10 possible types inside the partial path.
Therefore, there are at most $(10t)!=t^{O(t)}$ possible partial paths. To each of them, we associate a number between $1$ and $\kappa$ using the function $f$. There are at most $\kappa^{(10t)!}$ possible functions $f$, thus, the total number of partial solutions inside each node is at most $\kappa^{t^{O(t)}}$, and we can compute them in this time. We can also check the validity, compatibility, etc of the DP-states in time that is polynomial in the size of a DP-state. In a forget/introduce node, we process all pairs of DP-states coming from the parent and child node, which takes $p^2$ time (if there are at most $p$ possible DP-states per node), and in a join node, we go through all triples of DP-states coming from the parent and the two children nodes, which takes $p^3$ time. In each case, we have $p=\kappa^{t^{O(t)}}$ and thus $p^3$ is still $\kappa^{t^{O(t)}}$. We have at most $4n$ nodes in the tree decomposition; hence, the algorithm solves \pc in time $\kappa^{t^{O(t)}}n$.\end{proof}

\subsection{The case of induced paths and edge-disjoint paths}
The case where the solution paths are required to be induced can be handled easily, indeed, a solution path is induced if and only if its intersection with every bag is induced as well. Thus, it suffices to only consider the DP-states where the partial paths have no unwanted chord inside the bag. Otherwise, the algorithm remains the same.

For the edge-disjoint variants, similarly, it suffices to check for every considered DP-state, whether its partial paths are edge-disjoint. Otherwise, the algorithm remains the same.

\section{\pp on graphs of bounded treewidth}\label{sec:PP}
In \pp, since the solution paths are vertex-disjoint, the number of paths that go through each bag of the tree decomposition is at most the size of the bag.
Hence, a DP-state from the algorithm in Section~\ref{sec:tw} simply needs a collection of partial paths that forms a partition of the bag into ordered sets, and thus, there are at most $t^{O(t)}=2^{O(t\cdot\log t)}$ possible DP-states for every node, since there are at most $(t+1)!=t^{O(t)}$ orderings and $t^{O(t)}$ possible partitions of a bag. Thus, we obtain an FPT algorithm for \pp in time $2^{O(t\cdot\log t)}n$.

We next use the Cut\&Count method described in~\cite{cutandcount} to solve \pp in time $2^{O(t)}n$.
In \cite{cutandcount}, they show that this method uses a randomization technique based on the \emph{Isolation Lemma}~\cite{mulmuley1987matching}, which leads to a \emph{one-sided error Monte-Carlo} algorithm. A one-sided error Monte-Carlo algorithm is a randomized algorithm where a probabilistic bias towards a specific outcome is provided.
To be specific, our algorithm is a false-biased one-sided error Monte-Carlo algorithm. 
Therefore, if the solution to our problem does not exist, then the algorithm always returns "NO" accurately, whereas it does not always return "YES" accurately; it returns "YES" with a reasonably good probability.  

We now define the notions required for the isolation lemma. Let $\mathcal{S}=\{S_1,S_2,\ldots,S_k\}$ be a family of subsets of a universe $U=\{x_1,x_2,\ldots, x_n\}$ and $w:U \rightarrow \mathbb{Z}$ be a weight function. 
\begin{definition}
     The weight function $w$ \emph{isolates} the subset family $\mathcal{S}$ if there exists a unique set $S_i \in \mathcal{S}$ such that $\sum_{x_j \in S_i} w(x_j)$ is minimum.
\end{definition}
The following \emph{Isolation Lemma} by Mulmuley, Vazirani and Vazirani~\cite{mulmuley1987matching} is the key ingredient of the Cut\&Count technique. This probabilistic lemma was originally introduced to give a parallel algorithm for finding a maximum matching in a general graph. The lemma works for any arbitrary set system and finds applications in parallel computation and randomized reductions.
\begin{lemma}[Isolation Lemma {\cite[Lemma 1]{mulmuley1987matching}}]
 If the weights $w(x_i)$ of the elements $x_i \in U$ are chosen uniformly and independently at random from $\{1,2,\ldots,N\}$ then 
     $$Pr[w\ isolates\ \mathcal{S}] \geq 1-\frac{n}{N}.$$     
\end{lemma}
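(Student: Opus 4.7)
The plan is to prove the contrapositive via a union bound over elements, showing that the probability of non-isolation is at most $n/N$. Call an element $x_i \in U$ \emph{ambiguous} (with respect to $w$) if there exist two minimum-weight sets $S,S' \in \mathcal{S}$ with $x_i \in S$ and $x_i \notin S'$. The first step is the observation that if $w$ does \emph{not} isolate $\mathcal{S}$, then $\mathcal{S}$ contains at least two distinct minimum-weight sets, so their symmetric difference is nonempty, and any element of this symmetric difference is ambiguous. Therefore it suffices to bound $\Pr[x_i \text{ ambiguous}]$ for each $x_i$ individually and apply the union bound over the $n$ elements.

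For a fixed $x_i$, I would condition on the weights $w(x_j)$ for all $j \neq i$ (the principle of deferred decisions) and analyze the remaining randomness only in $w(x_i)$. Define
\[
\alpha_i = \min\{w(S) : S \in \mathcal{S},\ x_i \notin S\}, \qquad \beta_i = \min\{w(S) - w(x_i) : S \in \mathcal{S},\ x_i \in S\},
\]
with the convention $\alpha_i = +\infty$ (resp.\ $\beta_i = +\infty$) if the corresponding family is empty. Both $\alpha_i$ and $\beta_i$ depend only on $\{w(x_j) : j \neq i\}$, so after conditioning they become deterministic constants. The minimum weight among sets containing $x_i$ equals $\beta_i + w(x_i)$, while the minimum among sets avoiding $x_i$ equals $\alpha_i$. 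Hence $x_i$ is ambiguous if and only if these two minima coincide, i.e.\ $w(x_i) = \alpha_i - \beta_i$. Since $w(x_i)$ is uniform on $\{1,\ldots,N\}$ and independent of the other weights, the conditional probability of hitting this single prescribed value is at most $1/N$; averaging over the conditioning yields $\Pr[x_i \text{ ambiguous}] \leq 1/N$.

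Summing these bounds via the union bound gives $\Pr[\text{some } x_i \text{ is ambiguous}] \leq n/N$, and combining with the first step produces $\Pr[w \text{ isolates } \mathcal{S}] \geq 1 - n/N$, as required. The main subtle point to nail down is the deferred-decisions step: I must verify that the conditional bound $1/N$ holds uniformly over every fixing of the other weights (which it does, precisely because $\alpha_i - \beta_i$ is a constant under the conditioning), and handle the degenerate cases $\alpha_i = +\infty$ or $\beta_i = +\infty$ cleanly---in those cases every minimum-weight set either avoids or contains $x_i$, so $x_i$ cannot be ambiguous and the contributing event is empty.
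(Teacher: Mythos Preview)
Your argument is correct and is precisely the standard proof of the Isolation Lemma. Note, however, that the paper does not supply its own proof of this statement: it is quoted as a known result from \cite{mulmuley1987matching} and used as a black box, so there is nothing in the paper to compare your proof against.
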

We use the Cut\&Count technique following \cite{cutandcount} for our problem. The technique has two parts: the Cut part and the Count part. We use the notion of \emph{markers} (described in \cite[Section 3.2]{cutandcount}).
Let $\mathcal{S}$ be the solution set of tuples of the form $(P,M)$ for \pp for the given graph $G=(V,E)$, such that $P \subseteq E$ is a path partition of $G$, and $M \subseteq V$ is a set of marked vertices (markers) with $|M| = k$.

In this method, we consider the set of paths in the solution as a subset of the edges, and in the next part we explain in detail how to check whether a subset of edges corresponds to a path partition.
Note that in our case, we do not need to store the sequence of the vertices in a path, as we are looking for the subset of edges which reflects the sequence of the vertices of the path constructed by that particular set of edges. 
 Additionally, in the above setting, it is required that each path in $P$ contains at least one marked vertex from $M$. This ensures that the number of paths in $P$ is at most $k$.

\paragraph{The Cut part} We assign weights to the edges and vertices of $G$ (defining the universe $U = E \cup V$) uniformly and independently at random from $\{1,2,\ldots,N\}$, such that $N=2(|E|+|V|)$. Therefore, the weight function $w$ is defined as $w:E \cup V \rightarrow \{1,2,\ldots,2(|E|+|V|)\}$. 

For a particular integer weight $\omega$, where $0 \leq \omega \leq 2(|E|+|V|)^2$, we define $\mathcal{R}_\omega$, $\mathcal{S}_\omega$, and $\mathcal{C}_\omega$ as follows:
\begin{enumerate}
    \item \label{Rw} Let $\mathcal{R}_\omega$ be the set of tuples of the form $(P,M)$, where $P \subseteq E$ is a path partition of $G$, and $M \subseteq V$ is a set of marked vertices with $|M| = k$. $\mathcal{R}_\omega$ denotes the family of possible candidate solutions corresponding to the weight $\omega$, i.e. $w(P \cup M) = \omega$. To check whether $P$ is a path partition of $G$, we follow the argument below. Let $G[P]$ be the subgraph of $G$ with vertex set $V$ and edge set $P$. Let $c_P'$ and $c_P''$ be the numbers of singleton and non-singleton components in $G[P]$, respectively. Since $P$ is a collection of vertex-disjoint paths and singleton vertices, $deg_{G[P]}(v) \leq 2$ for any vertex $v \in V$. The number of degree $0$ vertices in $G[P]$ is $c_P'$, the number of degree $1$ vertices in $G[P]$ is $2 \cdot c_P''$, and the remaining vertices, i.e. $(n-2 \cdot c_P'' - c_P')$ vertices, are degree $2$ vertices (assuming $n$ is the number of vertices in $G[P]$). To get the value of $c_P'$ and $c_P''$ we can run a BFS or DFS on $G[P]$.  
    \item \label{Sw} Let $\mathcal{S}_\omega$ be the solution set of tuples $(P,M) \in \mathcal{R}_\omega$ such that each component in $G[P]$ contains at least one marked vertex from $M$ (if the component is a singleton vertex then that vertex must be a marked vertex from $M$).
    \item \label{Cw} Let $\mathcal{C}_\omega$ be the set of tuples of the form $((P,M),(V(P_1),V(P_2)))$, where $(P,M) \in \mathcal{R}_\omega$ and $(V(P_1),V(P_2))$ is a \emph{consistent cut} of $G[P]$ and $M \subseteq V(P_1)$ i.e. all the marked vertices $M$ are on one side of the cut. A cut is called a consistent cut if for any two disjoint vertices $u \in V(P_1)$ and $v \in V(P_2)$, the edge $uv \notin E(G)$. 
\end{enumerate}

\begin{proposition}
    $G$ has a path partition of size at most $k$ if and only if there exists a weight value $\omega$ for which $\mathcal{S}_\omega$ is non-empty. 
\end{proposition}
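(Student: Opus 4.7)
The proposition is essentially a definitional reformulation: a path partition of size at most $k$ corresponds exactly to a pair $(P,M)$ in some $\mathcal{S}_\omega$. The plan is to unpack the definitions of $\mathcal{R}_\omega$ and $\mathcal{S}_\omega$ in each direction; the weight function $w$ plays no role since $\omega$ is simply computed from whichever candidate we produce.

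For the backward direction, I would start by fixing an arbitrary $\omega$ such that $\mathcal{S}_\omega \neq \emptyset$ and picking some $(P,M)\in\mathcal{S}_\omega$. From item \ref{Rw}, the edge set $P$ is a path partition of $G$, so every connected component of $G[P]$ is either a singleton vertex or a non-trivial path, and together they partition $V$. Denote the number of components by $c_P' + c_P''$ as in item \ref{Rw}. From item \ref{Sw}, each such component contains at least one vertex of $M$. Since distinct components are vertex-disjoint, the markers used to witness this assignment are pairwise distinct, which gives $c_P' + c_P'' \leq |M| = k$. Hence the path partition encoded by $P$ has size at most $k$.

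For the forward direction, I would suppose that $G$ admits a path partition of size $j \leq k$ realized by an edge set $P \subseteq E$ whose induced subgraph $G[P]$ has exactly $j$ components (singleton vertices or non-trivial paths). To fit the framework one must produce a marker set $M$ of size \emph{exactly} $k$ such that each component contains at least one vertex of $M$. I would construct $M$ in two stages: first pick one arbitrary vertex from each of the $j$ components (giving $j$ distinct vertices, one per component), and then, assuming without loss of generality that $k\leq |V|$, complete $M$ by adding $k-j$ further vertices chosen arbitrarily from $V$ not yet selected. The resulting pair $(P,M)$ lies in $\mathcal{R}_\omega$ for $\omega := w(P \cup M)$, and by construction satisfies the marker condition of item \ref{Sw}, so $(P,M) \in \mathcal{S}_\omega$, which proves the direction.

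The main obstacle is essentially bookkeeping rather than any genuine difficulty: one must be careful that the extra $k-j$ markers added in the forward direction do not destroy any part of the $\mathcal{S}_\omega$ definition (they do not, because they simply reinforce components that already contain a marker), and that singleton components are handled correctly (which is explicit in the definition of $\mathcal{S}_\omega$). The assumption $k \leq |V|$ is harmless since if $k > |V|$ then a path partition of size at most $k$ trivially exists and the instance can be discarded.
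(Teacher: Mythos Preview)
Your proof is correct and follows essentially the same approach as the paper's own proof: in each direction you unpack the definitions of $\mathcal{R}_\omega$ and $\mathcal{S}_\omega$, using vertex-disjointness of components to bound the number of paths by $|M|=k$ in one direction, and choosing one marker per path plus setting $\omega := w(P\cup M)$ in the other. You are in fact slightly more careful than the paper in the forward direction: the paper only marks one vertex per path and notes $|M|\le k$, whereas you explicitly pad $M$ with $k-j$ additional vertices (under the harmless assumption $k\le |V|$) so that $|M|=k$ exactly, as the definition of $\mathcal{R}_\omega$ requires.
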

\begin{proof}
According to the definition of $\mathcal{S}_\omega$, if $\mathcal{S}_\omega$ is non-empty that means there exists at least one tuple $(P,M)$ such that $P$ is a path partition, and each path in $P$ contains at least one marked vertex from $M$. 
    Since $|M|=k$, therefore, the number of paths in \pp is at most $k$.

Suppose $G$ has a \pp $P$ of size at most $k$, then from each path in $P$, we mark one vertex and add it to the set $M$.
    Since the paths in $P$ are vertex-disjoint, the size of $M$ is at most $k$.
    Let $\omega$ be the sum of the weights on the edges of the path $P$ and weights on the marked vertices.
    Thus, we can define a non-empty $\mathcal{S}_\omega$ by $P$ and $M$ and corresponding $\omega$.
\end{proof} 

\paragraph{The Count Part}
We now describe the Count part of the Cut\&Count method.
\begin{lemma}
\label{lem:cut}
    Given $G, \omega, \mathcal{R}_\omega, \mathcal{S}_\omega,$ and $\mathcal{C}_\omega$, for each weight value $\omega$, we have $|\mathcal{S}_\omega| \equiv |\mathcal{C}_\omega| \pmod{2}$.
\end{lemma}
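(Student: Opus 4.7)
The plan is to prove the congruence by fixing an element $(P,M) \in \mathcal{R}_\omega$ and counting exactly how many consistent cuts of $G[P]$ with $M \subseteq V(P_1)$ can be extended with it to form an element of $\mathcal{C}_\omega$. Summing over all $(P,M) \in \mathcal{R}_\omega$ gives $|\mathcal{C}_\omega|$, and the contribution of each $(P,M)$ will be shown to be either $1$ (when $(P,M)\in \mathcal{S}_\omega$) or an even number (otherwise). This yields the desired congruence modulo $2$.

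First I would observe a structural fact: a cut $(V(P_1),V(P_2))$ of $G[P]$ is \emph{consistent} if and only if every connected component of $G[P]$ lies entirely on one side of the cut. Indeed, if a component were split, it would contain at least one edge of $G[P]$ crossing the cut, contradicting consistency; conversely, assigning each component to one side yields no crossing edge of $G[P]$. Thus the consistent cuts of $G[P]$ are in bijection with the functions from the components of $G[P]$ to $\{1,2\}$.

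Next I would impose the additional requirement $M \subseteq V(P_1)$. Any component containing at least one vertex of $M$ is forced to lie on the $V(P_1)$-side; any component disjoint from $M$ may independently be placed on either side. Therefore, if $c(P,M)$ denotes the number of connected components of $G[P]$ that contain \emph{no} marked vertex of $M$, the number of consistent cuts of $G[P]$ compatible with $M \subseteq V(P_1)$ is exactly $2^{c(P,M)}$. Consequently,
\[
|\mathcal{C}_\omega| \;=\; \sum_{(P,M)\in\mathcal{R}_\omega} 2^{c(P,M)}.
\]

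Finally I would split the sum according to whether $(P,M) \in \mathcal{S}_\omega$. By definition, $(P,M)\in \mathcal{S}_\omega$ iff every component of $G[P]$ carries at least one marker, i.e.\ iff $c(P,M)=0$, contributing $2^0 = 1$. For $(P,M)\in \mathcal{R}_\omega \setminus \mathcal{S}_\omega$ we have $c(P,M)\geq 1$, so $2^{c(P,M)}$ is even. Hence
\[
|\mathcal{C}_\omega| \;=\; |\mathcal{S}_\omega| \;+\; \sum_{(P,M)\in\mathcal{R}_\omega\setminus\mathcal{S}_\omega} 2^{c(P,M)} \;\equiv\; |\mathcal{S}_\omega| \pmod{2},
\]
which is the claim. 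The only delicate step is the structural characterisation of consistent cuts as component-wise partitions; once that is in place, everything reduces to a clean parity argument on the powers of two, so I do not expect any real obstacle.
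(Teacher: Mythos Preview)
Your proof is correct and follows essentially the same approach as the paper: both fix $(P,M)\in\mathcal{R}_\omega$, count the consistent cuts with $M\subseteq V(P_1)$ as $2^{c}$ where $c$ is the number of unmarked components of $G[P]$, and conclude by observing that only the summands with $c=0$ (i.e.\ those from $\mathcal{S}_\omega$) are odd. Your write-up is slightly more explicit about why consistent cuts correspond to component-wise assignments, but the argument is the same.
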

\begin{proof}
For a tuple $(P,M) \in \mathcal{R}_\omega$, let $c_P$ be the number of connected components of $G[P]$ that do not contain any vertex $v \in M$. 
\\Given a tuple $((P,M),(V(P_1),V(P_2))) \in \mathcal{C}_\omega$, all the marked vertices $v \in M$ are in $V(P_1)$. Therefore, the components containing no vertices of $M$ are either part of $V(P_1)$ or $V(P_2)$.
Hence, for a fixed tuple $(P,M) \in \mathcal{R}_\omega$, there exist $2^{c_P}$ tuples in $\mathcal{C}_\omega$, by varying the consistent cuts $V(P_1)$ and $V(P_2)$. Therefore, $|\mathcal{C}_\omega| = \sum_{\forall (P,M) \in \mathcal{R}_\omega} 2^{c_P}$. Since $\mathcal{S}_\omega$ is the solution set of tuples $(P,M) \in \mathcal{R}_\omega$ such that each connected component of $G[P]$ contains at least one marked vertex from $M$, a specific tuple $(P,M) \in \mathcal{S}_\omega$ corresponds to those tuples of $\mathcal{C}_\omega$ where $c_P = 0$, or equivalently $V(P_2) = \emptyset$. Therefore, $|\mathcal{S}_\omega| \equiv |\mathcal{C}_\omega| \pmod{2}$.
\end{proof}

The type of tree decomposition we use is the same as described in~\cite{cutandcount}. We call this type of tree decomposition an \emph{advanced nice tree decomposition} where along with the properties of a nice tree decomposition, an additional type of node is used. These nodes are called \emph{introduce edge nodes}.
\begin{definition}
    An \emph{introduce edge node} $x$ has exactly one child $y$ such that $X_x=X_y$. The node is labeled with an edge $uv\in E$ with $u,v\in X_x=X_y$.
\end{definition}
These nodes are 
present for every edge $uv \in E(G)$. Let $\mathcal{T}$ be an advanced nice tree decomposition, and $x$ be an introduce edge node for an edge $uv \in E(G)$ with a child bag $y$. Then $u, v \in X_x$ and $X_x = X_y$, indicating that the edge $uv$ is introduced in the bag of $x$. In a nice tree decomposition, when a vertex is introduced in an introduce node, all the edges that are incident to that vertex are also considered in the computation when we process the introduce node. However, for an advanced nice tree decomposition, there are separate bags (introduce vertex nodes and introduce edge nodes) that introduce a vertex and the edges incident to it separately. Therefore, when we process an introduce vertex node, the vertex that has been introduced in the corresponding bag remains isolated, and in later stages, with the help of introduce edge nodes, all the edges incident to that vertex are processed eventually, one at a time. The other node types in an advanced nice tree decomposition are the same as in Definition~\ref{nicetree}. It is noted in~\cite{cutandcount} that, given a tree decomposition, an advanced nice tree decomposition of the same width can be computed in polynomial time.    

\begin{lemma}
\label{lem:count}
    Given a graph $G=(V,E)$, an advanced nice tree decomposition $\mathcal{T}$ of $G$, a weight function $w \rightarrow \{1,2, \ldots,N\}$, and an integer $k$, there exists an algorithm that computes $|\mathcal{C}_\omega| \pmod{2}$ for all $0 \leq \omega \leq (k+ |V|)N$ in $|V|^{O(1)} N^2 25^t$ time.
\end{lemma}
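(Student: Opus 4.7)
The plan is to design a dynamic programming scheme on the advanced nice tree decomposition that, for each node $v$, maintains a table counting (mod $2$) the relevant partial configurations, indexed by a weight $\omega$ and by a state vector $\sigma$ on the bag. Each bag vertex $x \in X_v$ is assigned one of five state labels: $\mathbf{0}$ (no edge of $P_v$ is incident to $x$ yet), $\mathbf{1}_L, \mathbf{1}_R$ (current degree $1$ in $G_v[P_v]$, on side $V_1$ or $V_2$), and $\mathbf{2}_L, \mathbf{2}_R$ (degree $2$, on side $V_1$ or $V_2$). The entry $A_v[\omega, \sigma]$ stores the parity of the number of triples $(P_v, M_v, (V_1, V_2))$ where $P_v \subseteq E(G_v)$, $M_v \subseteq V_v$, $(V_1, V_2)$ is a consistent cut of $G_v[P_v]$ with $M_v \subseteq V_1$, every vertex of $V_v\setminus X_v$ has degree at most $2$ in $G_v[P_v]$ and is assigned to the side of its component, each $x\in X_v$ has the degree and side dictated by $\sigma(x)$, and $w(P_v \cup M_v) = \omega$. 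The marker-count constraint $|M|=k$ is enforced by adding a second index $k'$ to the table, which costs only a polynomial-in-$|V|$ overhead.

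I would fill the table bottom-up. At a leaf (empty bag), $A_v[0,\emptyset]=1$. At an introduce-vertex node adding $x$, I extend each child table entry by setting $\sigma(x)=\mathbf{0}$, optionally marking $x$, which adds $w(x)$ to $\omega$ and increments the marker count. At an introduce-edge node introducing edge $uv$, I either leave $P_v$ unchanged or add $uv$; the latter is valid only when the states of $u$ and $v$ are compatible on a common side $s\in\{L,R\}$ and each endpoint has degree at most $1$, triggering the simultaneous transitions $\mathbf{0}\to\mathbf{1}_s$ or $\mathbf{1}_s\to\mathbf{2}_s$ at both endpoints and adding $w(uv)$ to $\omega$. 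At a forget node removing $x$, the final degree of $x$ freezes at whatever is recorded in $\sigma(x)$; if $\sigma(x)=\mathbf{0}$, the vertex forms a singleton component whose side is still free, so we sum the two possibilities for sides $L$ and $R$, with marker placement allowed only if the chosen side is $L$. At a join node with children $u_1, u_2$, the update is the convolution $A_v[\omega,\sigma]\equiv\sum_{\omega_1+\omega_2=\omega}\sum_{(\sigma_1,\sigma_2)\to\sigma}A_{u_1}[\omega_1,\sigma_1]\cdot A_{u_2}[\omega_2,\sigma_2]\pmod 2$, where $(\sigma_1,\sigma_2)\to\sigma$ requires, for each $x\in X_v$, matching sides (when defined) and degrees summing to the degree in $\sigma(x)$.

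Correctness follows by induction on the tree decomposition, relying on Lemma \ref{sepa}: the separator property guarantees that any pair of configurations on $G_{u_1}$ and $G_{u_2}$ compatible on $X_v$ glues into a unique well-defined configuration on $G_v$, so the join convolution enumerates each global configuration exactly once. At the root (empty bag), the entry $A_r[\omega,\emptyset]$ equals $|\mathcal{C}_\omega|\pmod 2$. For the running time, each node's table has $5^{t+1}$ state labels and $O(|V|\cdot N)$ weight entries. Introduce and forget nodes are handled in time linear in the table size, up to polynomial factors. The bottleneck is the join node, whose convolution over all state pairs $(\sigma_1,\sigma_2)$ contributes $5^{t+1}\cdot 5^{t+1}=25^{t+1}$ work per weight pair, and whose convolution over the weights takes $O((|V|N)^2)$ per state pair; absorbing the factor $|V|^{2}$ into $|V|^{O(1)}$ yields the claimed $|V|^{O(1)} N^{2}\,25^t$ total bound. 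The main subtlety I expect to address carefully is the forget-node rule for vertices leaving in state $\mathbf{0}$: their contribution must faithfully reproduce the $2^{c_P}$ overcount for singleton components underlying Lemma \ref{lem:cut}, while interacting correctly with marker placement and the consistency of the cut.
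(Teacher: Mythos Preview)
Your proposal follows the Cut\&Count dynamic programming scheme, as does the paper, but there are two intertwined issues that leave a genuine gap.

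First, your five states differ from the paper's. You use $\{\mathbf{0},\mathbf{1}_L,\mathbf{1}_R,\mathbf{2}_L,\mathbf{2}_R\}$, tracking the cut side for degree-$1$ and degree-$2$ bag vertices but not for degree-$0$ ones. The paper does the opposite: its states are $\{0_1,0_2,1_1,1_2,2\}$, recording the side for degrees~$0$ and~$1$ and dropping it for degree~$2$ (once a vertex has two incident solution edges it is internal to a path, can acquire no further edges, and its side is already forced by its neighbours). Either encoding can in principle be made to yield the right count, but your choice makes the marker constraint $M\subseteq V_1$ awkward, which leads to the real problem.

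Second, you place markers at introduce-vertex nodes. This fails for two reasons. In a nice tree decomposition with empty leaf bags, a vertex that lies in a join bag is introduced separately in \emph{each} subtree below that join; marking at introduce time therefore lets the same vertex be marked (or not) independently several times, so the marker index $k'$ and the weight $\omega$ no longer combine correctly at join nodes. Moreover, at the moment of introduction your state is $\mathbf{0}$ with no side recorded, so you cannot enforce that a marked vertex ends up in $V_1$: a vertex marked at introduce can later transition to $\mathbf{1}_R$ when an edge is added, violating $M\subseteq V_1$, and your table has no way to forbid this. Your forget-node rule then mentions marker placement \emph{again} for state-$\mathbf{0}$ vertices, which makes the scheme incoherent as written. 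The paper avoids all of this by marking only at forget nodes (each vertex is forgotten exactly once) and by recording the side already at degree~$0$ via the states $0_1,0_2$, so that the constraint $M\subseteq V(P_1)\setminus X_x$ is enforced directly.
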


\begin{proof}
Let $X_x$ be the set of vertices in bag $x$.
Let $G_x$ be a subgraph of $G$ induced by vertex set $V_x$, where $V_x$ is the union of all the bags in the subtree of $\mathcal{T}$ rooted at a bag $x$ (including the vertices present in $x$), and the edge set $E_x$ be the set of edges of $G$ with both endpoints in $V_x$. The partial solutions of the problem are defined below. For each bag $x \in \mathcal{T}$, every integer $1 \leq i \leq k$, and for all weight values $0 \leq \omega \leq (k+ |V|)N$, we define $\mathcal{A}_x(i,\omega,s)$ as a collection of tuples $((P,M),(V(P_1),V(P_2)))$ 
with the following properties: 
    \begin{enumerate}
        \item $P \subseteq E_x$, and $M \subseteq V(P_1) \backslash X_x$. 
        \item $(V(P_1),V(P_2))$ be the consistent cut of $G[P]$, where $G[P]$ is a subgraph of $G$ with vertex set $V_x$ and edge set $P$.
        \item For every $v \in V_x$, $deg_{G[P]}(v) \leq 2$.
        \item $P$ is a path partition of $G[P]$ (follow the definition of $\mathcal{R}_\omega$ for details).
        \item $w(X) + w(M) = \omega$, and $|M| = i$.
        \item $s: X_x \rightarrow \{0_1, 0_2, 1_1, 1_2, 2\}$ is a function defined for each vertex in $x$ and that assumes values based on their degree in $G[P]$. The value $s(v)$ of a vertex $v \in x$ is of the form $\alpha_\beta$, where $\alpha$ denotes the degree of $v$ in $G[P]$ with values either $0$ (singleton set, not included in a path) or $1$ (endpoint of a path) or $2$ (internal vertex of a path) and $\beta$ denotes which side of the consistent cut $v$ is present with values either $1$ (when $v \in V(P_1)$) or $2$ (when $v \in V(P_2)$). 
        \end{enumerate}
Note that when a vertex is forgotten in the forget node, then it can be made a marked vertex by properly adjusting the value of $i$. Moreover, if $deg_{G[P]}(v) = 2$ then $v$ has been processed and is no longer required in the future since $v$ is an internal vertex of a path. Therefore, it is not necessary to store which side of the cut $v$ is in such a case. Since we consider singleton paths in our calculation, it is important to store which side of the cut a vertex $v$ is in when $deg_{G[P]}(v) = 0$.

Let $r$ be the root node of $\mathcal{T}$. Since $|\mathcal{C}_\omega| \pmod 2 = \mathcal{A}_r(i,\omega,\emptyset) \pmod 2$, to compute the value of $|\mathcal{C}_\omega| \pmod 2$ for all $\omega$ it is enough to compute $\mathcal{A}_r(i,\omega,\emptyset)$ for all $\omega$.
\paragraph{\textbf{Introduce vertex node}} Let $v$ be the vertex introduced in the introduce vertex bag. Since none of the edges incident to $v$ are processed at this stage, $v$ remains a singleton path.
$$\mathcal{A}_x(i, \omega, s[v \rightarrow 0_j]) = \mathcal{A}_y(i,\omega, s) \qquad \text{for }j \in \{0,1\}.
$$
\paragraph{\textbf{Introduce edge node}} We refer to  \cite[Theorem A.14]{cutandcount} for this part. The only thing that is different for our problem is the states for the degree $0$ vertices (either $0_1$ or $0_2$). Following the definition of $subs(\alpha)$ in \cite{cutandcount} we can write  $subs(0_1) = subs(0_2) = \emptyset$. The rest of the analysis is the same as for the introduce edge node in \cite[Theorem A.14]{cutandcount}.  

\paragraph{\textbf{Forget Node}}
 Let $v$ be the vertex that has been forgotten in the forget node. Then $v$ can be a singleton vertex, or an internal vertex of a path, or an endpoint of a path. Hence, $$\mathcal{A}_x(i,w,s) = \sum \limits_ {\alpha \in \substack{\{0_1,0_2,\\ 1_1,1_2,2\}}}\mathcal{A}_y(i,w,s[v \rightarrow \alpha]).$$ 

 \paragraph{\textbf{Join node}} We again refer to \cite[Theorem A.14]{cutandcount} for this part. Similar to the introduce edge node, the two states of a degree $0$  vertex are handled separately.

 A standard dynamic programming to compute all the values of $\mathcal{A}_x(i,\omega,s)$ takes time $|V|^{O(1)} N^2 25^t$, where the number $25 = 5^2$ comes from the calculation of the join node. (Note that one could use fast subset convolution like in \cite{cutandcount} to reduce the constant value $25$, which we do not do here.)
\end{proof}

\begin{theorem}
There exists a Monte-Carlo algorithm that, given a graph $G$ and a tree decomposition of width $t$ of $G$, solves \pp in $25^t |V|^ {O(1)}$ time. The algorithm does not give any false positives and gives false negatives with
probability at most $1/2$.
\end{theorem}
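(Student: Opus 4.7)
The plan is to combine the Cut and Count ingredients (Lemma~\ref{lem:cut} and Lemma~\ref{lem:count}) with the Isolation Lemma. I would phrase the theorem as a decision procedure (``does $G$ admit a path partition with at most $k$ paths?'') and handle the optimization version by looping $k$ from $1$ to $|V|$, which is absorbed into the $|V|^{O(1)}$ factor.

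The algorithm proceeds as follows: draw a weight function $w : E \cup V \to \{1, 2, \ldots, N\}$ uniformly at random with $N := 2(|E|+|V|)$; for every $\omega \in \{0, 1, \ldots, (k+|V|)N\}$, invoke the dynamic program of Lemma~\ref{lem:count} to compute $|\mathcal{C}_\omega| \bmod 2$; output ``YES'' if and only if some $|\mathcal{C}_\omega|$ is odd.

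For correctness, I would argue the two directions separately. No false positive: if $G$ admits no path partition of size at most $k$, then the Proposition gives $\mathcal{S}_\omega = \emptyset$ for every $\omega$, and Lemma~\ref{lem:cut} forces $|\mathcal{C}_\omega|$ even for every $\omega$, so the algorithm correctly answers ``NO''. Bounded false negative: if such a partition does exist, I would view the family $\mathcal{S} := \bigcup_\omega \mathcal{S}_\omega$ as a family of subsets of $U := E \cup V$ via the injection $(P, M) \mapsto P \cup M$ (well-defined since $E$ and $V$ are disjoint inside $U$, so $P$ and $M$ can be uniquely recovered from their union). The Isolation Lemma, applied with $N = 2|U|$, then guarantees that $w$ isolates $\mathcal{S}$ with probability at least $1 - |U|/N = 1/2$; when isolation occurs, there is an $\omega^*$ with $|\mathcal{S}_{\omega^*}| = 1$, whence by Lemma~\ref{lem:cut}, $|\mathcal{C}_{\omega^*}|$ is odd and the algorithm outputs ``YES''.

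The running time is dominated by the dynamic program of Lemma~\ref{lem:count}, which costs $|V|^{O(1)} N^2 \cdot 25^t = 25^t \cdot |V|^{O(1)}$ per invocation (since $N$ is polynomial in $|V|$). The outer loop over $k$ contributes a further $|V|$ factor, still within $|V|^{O(1)}$. The step I expect to require the most care is verifying that the hypothesis of the Isolation Lemma is satisfied: namely, that the family $\mathcal{S}$ consists of genuinely distinct subsets of the universe $U$, so that ``isolating'' in the lemma's subset-family sense really coincides with isolating a unique element of $\mathcal{S}$. As noted above, the disjointness of $E$ and $V$ within $U$ makes this immediate; once it is in place, the theorem follows by a direct assembly of the already-proved lemmas.
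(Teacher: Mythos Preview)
Your proposal is correct and follows the same Cut\&Count strategy as the paper, which simply cites Algorithm~1 and Corollary~3.1 of \cite{cutandcount} together with Lemmas~\ref{lem:cut} and~\ref{lem:count}. In fact your write-up is more explicit than the paper's one-line proof: you spell out the loop over $k$, the two correctness directions, and the injectivity of $(P,M)\mapsto P\cup M$ needed for the Isolation Lemma, all of which are implicit in the cited framework.
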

\begin{proof}
    Combining Algorithm 1 (by setting $U = E \cup V$), Corollary 3.1 of \cite{cutandcount}, and Lemma~\ref{lem:count}, we get the algorithm to solve \pp. The correctness of the algorithm and the running time follow from Lemma~\ref{lem:cut} and Lemma~\ref{lem:count}, respectively. 
\end{proof}

\section{Conclusion}\label{sec:conclu}

We have re-initiated the study of \pc, which surprisingly is not extensively studied. We settled its complexity for trees by giving a linear-time algorithm on this class, and we gave an explicit $n^{t^{O(t)}}$ XP-time dynamic programming scheme for graphs of treewidth $t$, or $\kappa^{t^{O(t)}}n$ if we have an upper bound on the solution size $\kappa$. An algorithm for \pp using the Cut\&Count technique gives a $2^{O(t)}n$ FPT running time. These running times also hold for the variants of \pc and \pp where the solution paths are required to be induced and/or edge-disjoint.

It would be nice to improve the running times of our algorithm for \pc. Can one get an improved $n^{O(t)}$ algorithm? Can it actually be solved in FPT time for parameter treewidth alone? It is possible that this is not the case, as the number of solution paths going through one bag of the tree decomposition can be arbitrarily large (see Figure~\ref{bubble}). In case of a negative answer, this would show a striking contrast between the algorithmic complexities of \pc and \pp.

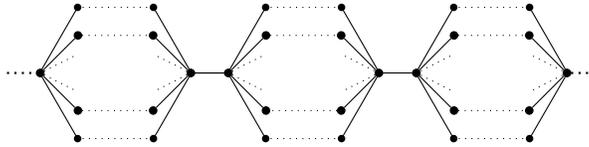
\begin{figure}[H]\label{bubble}
    \centering
    \begin{tikzpicture}
        \foreach \i in {0,1,2}
        {
            \node[regular polygon, regular polygon sides=6, minimum size=2cm] (hex\i) at (\i*2.5, 0) {}; 
        }

        \foreach \i in {0,1,2}
        {
            \foreach \j in {1,...,6}
            {
                \node[circle, fill=black, inner sep=1pt] at (hex\i.corner \j) {}; 
            }
        }

        \draw [dotted] (hex0.corner 1) -- (hex0.corner 2);
        \draw [dotted] (hex1.corner 1) -- (hex1.corner 2);
        \draw [dotted] (hex2.corner 1) -- (hex2.corner 2);
        \draw (hex0.corner 1) -- (hex0.corner 6);
        \draw (hex1.corner 1) -- (hex1.corner 6);
        \draw (hex2.corner 1) -- (hex2.corner 6);
        \draw (hex0.corner 3) -- (hex0.corner 2);
        \draw (hex1.corner 3) -- (hex1.corner 2);
        \draw (hex2.corner 3) -- (hex2.corner 2);
        \draw (hex0.corner 3) -- (hex0.corner 4);
        \draw (hex1.corner 3) -- (hex1.corner 4);
        \draw (hex2.corner 3) -- (hex2.corner 4);        
        \draw (hex0.corner 5) -- (hex0.corner 6);
        \draw (hex1.corner 5) -- (hex1.corner 6);
        \draw (hex2.corner 5) -- (hex2.corner 6);
        \draw[dotted] (hex0.corner 5) -- (hex0.corner 4);
        \draw[dotted] (hex1.corner 5) -- (hex1.corner 4);
        \draw[dotted] (hex2.corner 5) -- (hex2.corner 4);

        \node[circle, draw=white, fill=white, inner sep=1pt] (z) at (-1.5,0) {};
        \node[circle, draw, fill=black, inner sep=1pt] (a) at (-1,0) {};
        \node[circle, draw=white, fill=white, inner sep=1pt] (b) at (-0.5,.25) {};
        \node[circle, draw=white, fill=white, inner sep=1pt] (b') at (-0.5,-0.25) {};
        \node[circle, draw=white, fill=white, inner sep=1pt] (b'') at (0.5,0.25) {};
        \node[circle, draw=white, fill=white, inner sep=1pt] (b''') at (0.5,-0.25) {};
        \node[circle, draw, fill=black, inner sep=1pt] (c) at (1,0) {};
        \node[circle, draw, fill=black, inner sep=1pt] (c') at (1.5,0) {};
        \node[circle, draw=white, fill=white, inner sep=1pt] (d) at (2,.25) {};
        \node[circle, draw=white, fill=white, inner sep=1pt] (d') at (2,-.25) {};
        \node[circle, draw=white, fill=white, inner sep=1pt] (e) at (3,.25) {};
        \node[circle, draw=white, fill=white, inner sep=1pt] (e'') at (3,-0.25) {};
        \node[circle, draw, fill=black, inner sep=1pt] (e') at (3.5,0) {};
        \node[circle, draw, fill=black, inner sep=1pt] (f) at (4,0) {};
        \node[circle, draw=white, fill=white, inner sep=1pt] (g) at (4.5,.25) {};
        \node[circle, draw=white, fill=white, inner sep=1pt] (g') at (4.5,-.25) {};
        \node[circle, draw=white, fill=white, inner sep=1pt] (h) at (5.5,.25) {};
        \node[circle, draw=white, fill=white, inner sep=1pt] (h') at (5.5,-.25) {};
        \node[circle, draw=black, fill=black, inner sep=1pt] (p) at (6,0) {};
        \node[circle, draw=white, fill=white, inner sep=1pt] (i) at (6.5,0) {};
        \draw[thick,dotted](p) -- (i);
        \draw[thick,dotted](z) -- (a);
        
        \draw[dotted](a)--(b');
        \draw[dotted](a)--(b);
        \draw[dotted](c)--(b'');
        \draw[dotted](c)--(b''');
        \draw[dotted](c')--(d');
        \draw[dotted](c')--(d);
        \draw[dotted](e')--(e'');
        \draw[dotted](e)--(e');
        \draw[dotted](g)--(f);
        \draw[dotted](g')--(f);
        \draw[dotted](h)--(p);
        \draw[dotted](h')--(p);
        
        \node[circle, draw, fill=black, inner sep=1pt] (j) at (-0.5,.5) {};
        \node[circle, draw, fill=black, inner sep=1pt] (k) at (0.5,.5) {};
        \node[circle, draw, fill=black, inner sep=1pt] (l) at (2,.5) {};
        \node[circle, draw, fill=black, inner sep=1pt] (m) at (3,.5) {};
        \node[circle, draw, fill=black, inner sep=1pt] (n) at (4.5,.5) {};
        \node[circle, draw, fill=black, inner sep=1pt] (o) at (5.5,.5) {};
        \draw (a) -- (j);
        \draw (k)--(c)--(c')--(l);
        \draw(m)--(e')--(f)--(n);
        \draw(o)--(p);
        \draw[dotted] (j)--(k);
        \draw[dotted] (l)--(m);
        \draw[dotted] (n)--(o);
        \node[circle, draw, fill=black, inner sep=1pt] (v) at (-0.5,-.5) {};
        \node[circle, draw, fill=black, inner sep=1pt] (q) at (0.5,-.5) {};
        \node[circle, draw, fill=black, inner sep=1pt] (r) at (2,-.5) {};
        \node[circle, draw, fill=black, inner sep=1pt] (s) at (3,-.5) {};
        \node[circle, draw, fill=black, inner sep=1pt] (t) at (4.5,-.5) {};
        \node[circle, draw, fill=black, inner sep=1pt] (u) at (5.5,-.5) {};
        \draw (a) -- (v);
        \draw (q)--(c)--(c')--(r);
        \draw (s)--(e');
        \draw (f)--(t);
        \draw(u)--(p);
        \draw[dotted] (s)--(r);
        \draw[dotted] (v)--(q);
        \draw[dotted] (t)--(u);
        
    \end{tikzpicture}
    \caption{A graph of arbitrarily large order $n$ and treewidth~2, with $\Theta(\sqrt{n})$ cut-vertices, and 2-connected components each of order $\Theta(\sqrt{n})$. An optimal solution to \pc consists of $\Theta(\sqrt{n})$ solution paths, each of length $\Theta(\sqrt{n})$, all going from the left to the right side of the graph. Thus, there is an unbounded number of solution paths going through every bag of any tree decomposition of bounded width.}
    \label{fig:example}
\end{figure}

\bibliographystyle{abbrv}
\bibliography{references}

\end{document}